\newcommand{\bm}{\mathbf} 
\newcommand{\be}{\begin{equation}}
\newcommand{\ee}{\end{equation}}
\newcommand{\bse}{\begin{subequations}}
\newcommand{\ese}{\end{subequations}}
\newcommand{\bea}{\begin{eqnarray}}
\newcommand{\eea}{\end{eqnarray}}
\newcommand{\ba}{{\bm a}}
\newcommand{\bb}{{\bm b}}
\newcommand{\e}{{\bm e}}
\newcommand{\bA}{{\bm A}}
\newcommand{\bK}{{\bm K}}
\newcommand{\bR}{{\bm R}}
\newcommand{\bF}{{\bf F}}
\newcommand{\bD}{{\bf D}}
\newcommand{\bC}{{\bf C}}
\newcommand{\bG}{{\bf G}}
\newcommand{\bH}{{\bf H}}
\newcommand{\bg}{{\bf g}}
\newcommand{\bh}{{\bf h}}
\newcommand{\bp}{{\bm p}}
\newcommand{\bw}{{\bf w}}
\newcommand{\bd}{{\bf d}}
\newcommand{\by}{{\bf y}}
\newcommand{\bzero}{{\bf 0}}
\newcommand{\eye}{{\bm I}}
\newcommand{\I}{{\bm I }}
\newcommand{\BD}{{\boldsymbol{\mathcal D}}}
\newcommand{\BW}{{\boldsymbol{\mathcal W}}}
\newcommand{\BP}{{\boldsymbol{\mathcal P}}}
\newcommand{\BE}{{\mathbb E}}
\newcommand{\bGamma}{\mbox{\boldmath$\Gamma$}}
\newcommand{\bPsi}{\mbox{\boldmath$\Psi$}}
\newcommand{\bpsi}{\mbox{\boldmath$\psi$}}
\newcommand{\bnu}{\mbox{\boldmath$\nu$}}
\newcommand{\bgamma}{\mbox{\boldmath$\gamma$}}
\newcommand{\bmu}{\mbox{\boldmath$\mu$}}
\newcommand{\sinc}{\mbox{sinc}}
\theoremstyle{definition}
\newtheorem{proposition}{Proposition}
\title{Filter Bank Multicarrier in Massive MIMO: Analysis and Channel Equalization}
\author{\normalsize Amir Aminjavaheri, Arman Farhang, and Behrouz Farhang-Boroujeny 

\thanks{This publication has emanated from research supported in part by a research grant from Science Foundation Ireland (SFI) and is co-funded under the European Regional Development Fund under Grant Number 13/RC/2077. Parts of the concepts based on which the contents of this paper are built have been presented in \cite{confversion}.}

\thanks{A.~Aminjavaheri and B.~Farhang-Boroujeny are with the Electrical and Computer Engineering Department, University of Utah, Salt Lake City, USA (e-mail: \{aminjav, farhang\}@ece.utah.edu).}

\thanks{A.~Farhang is with the School of Electrical and Electronic Engineering, University College Dublin, Ireland, Dublin4 (e-mail: arman.farhang@ucd.ie).} \vspace{-0.3in}}
\begin{document}

\maketitle

\begin{abstract}
We perform an asymptotic study of the performance of filter bank multicarrier (FBMC) in the context of massive multi-input multi-output (MIMO). We show that the  effects of channel distortions, i.e., intersymbol interference and intercarrier interference, do not vanish as the base station (BS) array size increases. As a result, the signal-to-interference-plus-noise ratio (SINR) cannot grow unboundedly by increasing the number of BS antennas, and is upper bounded by a certain deterministic value. We show that this phenomenon is a result of the correlation between the multi-antenna combining tap values and the channel impulse responses between the mobile terminals and the BS antennas. To resolve this problem, we introduce an efficient equalization method that removes this correlation, enabling us to achieve arbitrarily large SINR values by increasing the number of BS antennas. We perform a thorough analysis of the proposed system and find analytical expressions for both equalizer coefficients and the respective SINR. 
\end{abstract}

\begin{IEEEkeywords} 
massive MIMO, FBMC/OQAM, OFDM, SINR, channel equalization, asymptotic analysis.
\end{IEEEkeywords}

\section{Introduction}

\IEEEPARstart{M}{assive} multiple-input multiple-output (MIMO) is one of the primary technologies currently considered for the next generation of wireless networks, \cite{boccardi2014five}. In a massive MIMO system, the base station (BS) is equipped with a large number of antenna elements, in the order of hundreds or more, and is simultaneously serving tens of mobile terminals (MTs). By coherent processing of the signals over the BS antennas, the effects of uncorrelated noise and multiuser interference can be made arbitrarily small as the BS array size increases, \cite{marzetta2010noncooperative,rusek2013scaling}. Hence, unprecedented network capacities can be achieved.

Filter bank multicarrier (FBMC) is a waveform that has gained an increased attention in the recent years due to its improved spectral properties compared to orthogonal frequency division multiplexing (OFDM), \cite{farhang2011ofdm,perez2015mimo,aminjavaheri2015impact}. The application of FBMC to massive MIMO channels has been recently studied in \cite{armanfarhang2014filter}, where its so-called self-equalization property leading to a channel flattening effect was reported through simulations. According to this property, the effects of channel distortions (i.e., intersymbol interference and intercarrier interference) will diminish by increasing the number of BS antennas. The authors in \cite{rottenberg2017performance} obtain the asymptotic mean squared error (MSE) performance of FBMC in massive MIMO channels. Their analysis shows that the MSE becomes uniform across different subcarriers as a result of the channel hardening effect. In \cite{aminjavaheri2015frequency}, multi-tap equalization per subcarrier is proposed for FBMC-based massive MIMO systems to improve the equalization accuracy as compared to the single-tap equalization at the expense of a higher computational complexity. The authors in \cite{farhang2014pilot} show that the pilot contamination problem in multi-cellular massive MIMO networks, \cite{marzetta2010noncooperative}, can be resolved in a straightforward manner with FBMC signaling due to its special structure. These studies prove that FBMC is an appropriate match for massive MIMO and vice versa as they can both bring pivotal properties into the picture of the next generations of wireless systems. Specifically, this combination is of a great importance as not only the same spectrum is being simultaneously utilized by all the users but it is also used in a more efficient manner compared to OFDM.

Since the literature on FBMC-based massive MIMO is not mature yet, these systems need to go through meticulous analysis and investigation. In particular, in this paper, we perform an in-depth analysis on the performance of FBMC in massive MIMO channels. The focus of this paper is on the uplink transmission, while the theories and proposed techniques are trivially applicable to the downlink as well. We consider single-tap equalization per subcarrier, and investigate the performance of three most prominent linear combiners, namely, maximum-ratio combining (MRC), zero-forcing (ZF), and minimum mean-square error (MMSE). We show that the self-equalization property shown through simulations and claimed in \cite{armanfarhang2014filter} and \cite{aminjavaheri2015frequency} is not very accurate. More specifically, by increasing the number of BS antennas, the channel distortions average out only up to a certain extent, but not completely. Thus, the SINR saturates at a certain deterministic level. This determines an upper bound for the SINR performance of the system.

Our main contributions in this paper are the following; (i) We derive an analytical expression for the SINR saturation level using MRC, ZF, and MMSE combiners. (ii) We propose an effective equalization method to resolve the saturation problem. With the proposed equalizer in place, SINR grows without a bound by increasing the BS array size, and arbitrarily large SINR values are achievable. (iii) An efficient implementation of the proposed equalization method through using some concepts from multi-rate signal processing is also presented. (iv) Finally, we perform a thorough analysis of the proposed system, and find the analytical expressions for the SINR in the cases of MRC and ZF detectors. All the above analyses are evaluated and confirmed through numerical simulations.

It is worth mentioning that although the theories developed in this paper are applicable to all types of FBMC systems, the formulations are based on the most common type in the literature that was developed by Saltzberg, \cite{saltzberg1967performance}, and is known by different names including OFDM with offset quadrature amplitude modulation (OFDM/OQAM), FBMC/OQAM, and staggered multitone (SMT), \cite{farhang2011ofdm}. Throughout this paper, we refer to it as FBMC for simplicity.

The rest of the paper is organized as follows. To pave the way for the derivations presented in the paper, we review the FBMC principles in Section \ref{sec:system_model}. In Section \ref{sec:mmimo_fbmc}, we present the asymptotic equivalent channel model between the mobile terminals and the BS in an FBMC massive MIMO setup. This analysis will lead to an upper bound for the SINR performance of the system. Our proposed equalization method is introduced in Section \ref{sec:prototype_modify}. In Section \ref{sec:self-equalization}, we study the FBMC in massive MIMO from a frequency-domain perspective, leading to some insightful remarks regarding these systems. In Section \ref{sec:sinr}, we find the SINR performance of the FBMC system incorporating the proposed equalization method. The mathematical analysis of the paper as well as the efficacy of the proposed filter design technique are numerically evaluated in Section \ref{sec:numerical_results}. Finally, we conclude the paper in Section \ref{sec:conclusion}.

\textit{Notations:} Matrices, vectors and scalar quantities are denoted by boldface uppercase, boldface lowercase and normal letters, respectively. $A^{m,n}$ represents the element in the $m^{\rm{th}}$ row and the $n^{\rm{th}}$ column of $\bA$ and $\bA^{-1}$ signifies the inverse of $\bA$. $\I_M$ is the identity matrix of size $M\times M$, and $\bD={\rm diag}\{\ba\}$ is a diagonal matrix whose diagonal elements are formed by the elements of the vector $\ba$. The superscripts $(\cdot)^{\rm T}$, $(\cdot)^{\rm H}$ and $(\cdot)^\ast$ indicate transpose, conjugate transpose, and conjugate operations, respectively. The linear convolution is denoted by $\star$. The real and imaginary parts of a complex number are denoted by $\Re\{\cdot\}$ and $\Im\{\cdot\}$, respectively. $\mathbb{E}\{\cdot\}$ denotes the expected value of a random variable, and ${\rm tr} \{ \cdot\}$ is the matrix trace operator. The notation $\mathcal{CN}(0,\sigma^2)$ represents the circularly-symmetric complex normal distribution with zero mean and variance $\sigma^2$. Finally, $\delta_{ij}$ represents the Kronecker delta function.

\section{FBMC Principles} \label{sec:system_model}

\begin{figure*}[!t]
\centering
\includegraphics[scale=0.78]{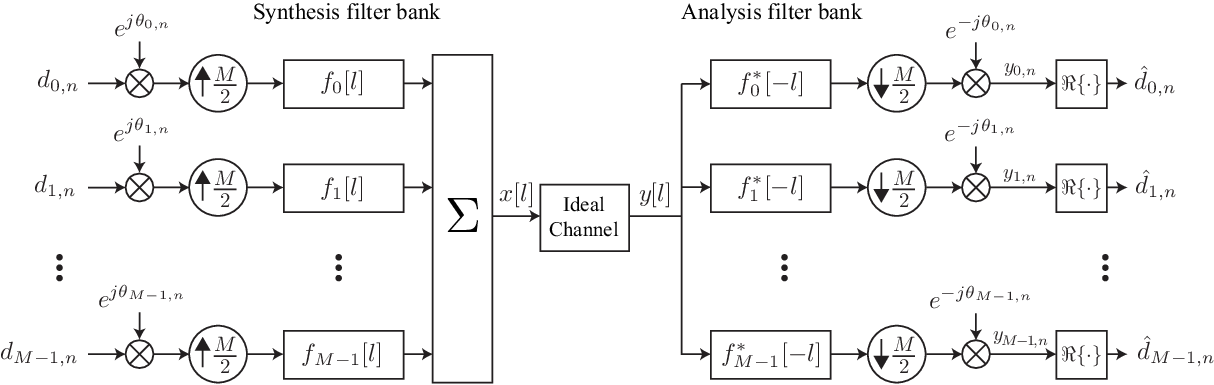}
\caption{Block diagram of the FBMC transceiver in discrete time.}
\label{fig:block_diagram}
\end{figure*}

\begin{figure*}[!t]
\centering
\includegraphics[scale=0.78]{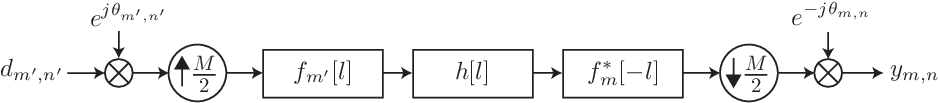}
\caption{The equivalent channel between the transmitted data symbol at time-frequency point $(m',n')$ and the demodulated symbol at time-frequency point $(m,n)$.}
\label{fig:equiv_chan}
\end{figure*}

We present the theory of FBMC in discrete time. Let $d_{m,n}$ denote the real-valued data symbol transmitted over the $m^{\rm th}$ subcarrier and the $n^{\rm th}$ symbol time index. The total number of subcarriers is assumed to be $M$. In order to avoid interference between the symbols and, thus, maintain the orthogonality, the data symbol $d_{m,n}$ is phase adjusted using the phase term $e^{j\theta_{m,n}}$, where $\theta_{m,n} = \frac{\pi}{2}(m+n)$. Accordingly, each symbol has a $\pm\frac{\pi}{2}$ phase difference with its adjacent neighbors in both time and frequency. The symbols are then pulse-shaped using a prototype filter $f[l]$, which has been designed such that $q[l] = f[l] \star f^*[-l]$ is a Nyquist pulse with zero crossings at $M$ sample intervals. The length of the prototype filter, $f[l]$, is usually expressed as $L_{\rm f} = \kappa M$, where $\kappa$ is called the overlapping factor\footnote{The overlapping factor indicates the number of adjacent FBMC symbols overlapping in the time domain.}. To express the above procedure in a mathematical form, the discrete-time FBMC waveform can be written as, \cite{farhang2014filter},
\bea \label{eqn:fbmc_waveform}
x[l] = \sum_{n=-\infty}^{+\infty} \sum_{m=0}^{M-1} d_{m,n} a_{m,n}[l],
\eea
where
\begin{align}
a_{m,n}[l] &= f_m[l - nM/2] e^{j\theta_{m,n}}.
\end{align}
Here, $f_m[l] \triangleq f[l] e^{j\frac{2\pi ml}{M}}$ is the prototype filter modulated to the center frequency of the  $m^{\rm th}$ subcarrier, and the functions $a_{m,n}[l]$, for $m \in \{0,\dots,M-1\}$ and $n\in\{-\infty,\dots,+\infty\}$, can be thought as a set of basis functions that are used to modulate the data symbols. Note that the spacing between successive symbols in the time domain is $M/2$ samples. In the frequency domain, the spacing between successive subcarriers is $1/M$ in normalized frequency scale. It can be shown that the basis functions $a_{m,n}[l]$ are orthogonal in the real domain, \cite{farhang2014filter}, i.e.,
\begin{align} \label{eqn:orthogonality}
\langle a_{m,n}[l], a_{m',n'}[l] \rangle_\Re &= \Re \bigg\{ \sum_{l = -\infty}^{+\infty} a_{m,n}[l] a^*_{m',n'}[l] \bigg\} \nonumber \\
&= \delta_{mm'} \delta_{nn'} .
\end{align}
As a result, the data symbols can be extracted from the synthesized signal, $x[l]$, according to
\be
d_{m,n} = \langle x[l], a_{m,n}[l] \rangle_\Re.
\ee
Fig.~\ref{fig:block_diagram} shows the block diagram of the FBMC transceiver. Note that considering the transmitter prototype filter $f[l]$, and the receiver prototype filter $f^*[-l]$, the overall effective pulse shape $q[l] = f[l] \star f^*[-l]$ is a Nyquist pulse by design. Also, in practice, in order to efficiently implement the synthesis (transmitter side) and analysis (receiver side) filter banks, one can incorporate the polyphase implementation to reduce the computational complexity, \cite{farhang2011ofdm}.

The presence of a frequency-selective channel leads to some distortion in the received signal. Thus, one may adopt some sort of equalization to retrieve the transmitted symbols at the receiver side. In this paper, we limit our study to a case where the channel impulse response remains time-invariant over the interval of interest. Accordingly, the received signal at the receiver can be expressed as 
\begin{align} \label{eqn:y}
y[l] &= h[l] \star x[l] + \nu[l],
\end{align}
where $h[l]$ represents the channel impulse response, and $\nu[l]$ is the additive white Gaussian noise (AWGN). We denote the length of the channel impulse response by $L_{\rm h}$.

At the receiver, after matched filtering and phase compensation, and before taking the real part (see Fig.~\ref{fig:block_diagram}), the demodulated signal $y_{m,n}$ can be expressed as
\be \label{eqn:demod_symbol}
y_{m,n} = \sum_{n'=-\infty}^{+\infty} \sum_{m'=0}^{M-1} H_{mm',nn'} \hspace{1pt} d_{m',n'} + \nu_{m,n} ,
\ee
where $\nu_{m,n}$ is the noise contribution, and the interference coefficient $H_{mm',nn'}$ can be calculated according to
\bse \label{eqn:siso_equiv_chan}
\begin{align} 
H_{mm',nn'} &=  h_{mm'}[n-n'] \hspace{1pt} e^{j(\theta_{m',n'}-\theta_{m,n})}, \\
h_{mm'}[n]  &= \Big( f_{m'}[l] \star h[l] \star f_m^\ast[-l] \Big)_{\downarrow \frac{M}{2}}.
\end{align}
\ese
The symbol $\downarrow \frac{M}{2}$ denotes $\frac{M}{2}$-fold decimation. In (\ref{eqn:siso_equiv_chan}), $h_{mm'}[n]$ is the equivalent channel impulse response between the transmitted symbols at subcarrier $m'$ and the received ones at subcarrier $m$. This includes the effects of the transmitter filtering, the multipath channel, and the receiver filtering; see Fig. \ref{fig:equiv_chan}. According to (\ref{eqn:demod_symbol}), the demodulated symbol $y_{m,n}$ suffers from interference originating from other time-frequency symbols. In practice, the prototype filter $f[l]$ is designed to be well localized in time and frequency. As a result, the interference is limited to a small neighborhood of time-frequency points around the desired point $(m,n)$.


In order to devise a simple equalizer to combat the frequency-selective effect of the channel, it is usually assumed that the symbol period $M/2$ is much larger than the channel length $L_{\rm h}$, or equivalently, the channel frequency response is approximately flat over each subcarrier band. With this assumption, the demodulated signal $y_{m,n}$ can be expressed as,  \cite{lele2008channel},
\be \label{eqn:siso_demod_symbol2}
y_{m,n} \approx H_{m} \big( d_{m,n} + u_{m,n} \big) + \nu_{m,n},
\ee
where $H_m \triangleq  \sum_{l=0}^{L_{\rm h}-1} h[l]  e^{-j\frac{2\pi ml}{M}}$ is the channel frequency response at the center of the $m^{\rm th}$ subcarrier. The term $u_{m,n}$ is called the \textit{intrinsic interference} and is purely imaginary. This term represents the contribution of the intersymbol interference (ISI) and intercarrier interference (ICI) from the adjacent time-frequency symbols around the desired point $(m,n)$. Based on  (\ref{eqn:siso_demod_symbol2}), the effect of channel distortions can be compensated using a single-tap equalizer per subcarrier. After equalization, what remains is the real-valued data symbol $d_{m,n}$, the imaginary term $u_{m,n}$, and the noise contribution. By taking the real part from the equalized symbol, one can remove the intrinsic interference and obtain an estimate of $d_{m,n}$.

It should be noted that the performance of the above single-tap equalization primarily depends on the validity of the assumption that the symbol duration is much larger than the channel length, or equivalently, the frequency response of the channel is approximately flat over the pass-band of each subcarrier. On the other hand, in highly frequency-selective channels, where the above assumption is not accurate any more, more advanced multi-tap equalization methods (see \cite{perez2015mimo,ihalainen2011channel}) should be deployed to counteract the multipath channel distortions. 

\section{Massive MIMO FBMC: Asymptotic Analysis} \label{sec:mmimo_fbmc}

In this section, we first extend the formulation of the previous section to massive MIMO channels. Then, we show that linear combining of the signals received at the BS antennas using the channel frequency coefficients leads to a residual interference that does not fade away even with an infinite number of BS antennas. Hence, we conclude, the SINR is upper bounded by a certain deterministic value, and arbitrarily large SINR performances cannot be achieved as the number of BS antennas grows. 

We consider a single-cell massive MIMO setup \cite{marzetta2010noncooperative}, with $K$ single-antenna MTs that are simultaneously communicating with a BS equipped with an array of $N$ antenna elements. As mentioned earlier, in this paper, we consider the uplink transmission while the results and our proposed technique are trivially applicable to the downlink transmission as well.

Let $x_k[l]$ represent the transmit signal of the terminal $k$. The received signal at the $i^{\rm th}$ BS antenna can be expressed as
\be \label{eqn:yi}
y_i[l] = \sum_{k=0}^{K-1} x_k[l] \star h_{i,k}[l] + \nu_i[l],
\ee
where $h_{i,k}[l]$ is the channel impulse response between the $k^{\rm th}$ terminal and the $i^{\rm th}$ BS antenna, and $\nu_i[l]$ is the additive noise at the input of the $i^{\rm th}$ BS antenna. We assume that the samples of the noise signal $\nu_i[l]$ are a set of independent and identically distributed (i.i.d.) $\mathcal{CN}(0,\sigma_\nu^2)$ random variables.

For a given terminal $k$, we model the corresponding channel responses using the channel power delay profile (PDP) $p_k[l], l = 0,\dots,L_{\rm h}-1$. In particular, we assume that the channel tap $h_{i,k}[l]$, $l \in \{0,\dots,L_{\rm h}-1\}$, follows a $\mathcal{CN}(0,p_k[l])$ distribution, and different taps are assumed to be independent. The above assumption implies that the BS antenna array is sufficiently compact so that the channel responses corresponding to a particular user and different BS antennas are subject to the same channel PDP. We also assume that the channels corresponding to different terminals and different BS antennas are independent. 
Moreover, for each terminal, the average transmitted power is assumed to be equal to one, i.e., $\mathbb{E}\{ |x_k[l]|^2\} = 1$. To simplify the analysis throughout the paper, we assume that the BS has a perfect knowledge of the channel state information (CSI).

Following (\ref{eqn:yi}), we can extend (\ref{eqn:demod_symbol}) to the MIMO case according to
\be \label{eqn:mimo_demod_symbol1}
\by_{m,n} = \sum_{n'=-\infty}^{+\infty} \sum_{m'=0}^{M-1} \bH_{mm',nn'} \bd_{m',n'} + \bnu_{m,n},
\ee
where $\by_{m,n}$ is an $N \times 1$ vector containing the demodulated symbols corresponding to different BS antennas, $\bd_{m,n}$ is a $K \times 1$ vector containing the real-valued data symbols of all the $K$ terminals transmitted at the $m^{\rm th}$ subcarrier and the $n^{\rm th}$ time instant, $\bnu_{m,n}$ is the noise contribution across different BS antennas, and $\bH_{mm',nn'}$ is an $N\times K$ channel matrix. The element $(i,k)$ of $\bH_{mm',nn'}$ can be calculated according to
\bse \label{eqn:mimo_equiv_chan}
\begin{align} 
H_{mm',nn'}^{i,k} &=  h_{mm'}^{i,k}[n-n'] \hspace{1pt} e^{j(\theta_{m',n'}-\theta_{m,n})}, \\
h_{mm'}^{i,k}[n]  &= \Big(f_{m'}[l] \star h_{i,k}[l] \star f_m^\ast[-l]\Big)_{\downarrow \frac{M}{2}}.
\end{align}
\ese

We assume that the BS uses a single-tap equalizer per antenna per subcarrier. Accordingly, combining the elements of $\by_{m,n}$ using an $N \times K$ matrix $\BW_m$, and taking the real part from the resulting signal, the estimate of the transmitted data symbols for all the terminals can be obtained as
\begin{align} \label{eqn:mimo_est_d1}
\hat{\bd}_{m,n} &= \Re \left\{ \BW_m^{\rm H} \hspace{2pt} \by_{m,n} \right\} \nonumber \\
&= \Re \Big\{ \sum_{n'=-\infty}^{+\infty} \sum_{m'=0}^{M-1} \BW_m^{\rm H} \bH_{mm',nn'} \bd_{m',n'}  + \BW_m^{\rm H} \bnu_{m,n} \Big\} \nonumber \\
&= \Re \Big\{  \sum_{n'=-\infty}^{+\infty} \sum_{m'=0}^{M-1}  \bG_{mm',nn'} \bd_{m',n'}  + \bnu'_{m,n} \Big\} ,
\end{align}
where $\bG_{mm',nn'} \triangleq \BW_m^{\rm H} \bH_{mm',nn'}$, and $\bnu'_{m,n} \triangleq \BW_m^{\rm H} \bnu_{m,n}$. Here, we examine MRC, ZF, and MMSE linear combiners. These combiners can be formed as 
\be \label{eqn:mrc_zf_mmse}
\BW_m = 
\begin{cases}
	\bH_m \bD_m^{-1} ,  & {\rm for~~ MRC}, \\
	\bH_m \left( \bH_m^{\rm H} \bH_m \right)^{-1},	   & {\rm for~~ ZF}, \\
	\bH_m \left( \bH_m^{\rm H} \bH_m + \sigma_\nu^2 \eye_K \right)^{-1},	   & {\rm for~~ MMSE}, 
\end{cases}
\ee
where $\bH_m$ is the channel coefficient matrix at the center of the $m^{\rm th}$ subcarrier, i.e., $H_m^{i,k} \triangleq \sum_{l=0}^{L_{\rm h}-1} h_{i,k}[l] e^{-j \frac{2\pi ml}{M}}$. In MRC, $\bD_m$ is a $K \times K$ diagonal matrix with the $k^{\rm th}$ diagonal element given by $D_m^{k,k} = \sum_{i=0}^{N-1} |H_m^{i,k}|^2$. The role of $\bD_m$ is to normalize the amplitude of the MRC output. Without this term, the amplitude grows linearly without a bound as the number of BS antennas increases.

We note that for large number of BS antennas $N$ and using the law of large numbers, $\bD_m$ tends to $N \eye_K$. Similarly, when $N$ grows large and due to the law of large numbers, $\bH_m^{\rm H} \bH_m$ tends to $N \eye_K$, \cite{ngo2013energy}.  Hence, all of the above combiners tend to $\frac{1}{N} \bH_m$, i.e., matched filter, as the number of BS antennas increases, \cite{ngo2013energy}. Therefore, in the following, to find the various interference terms in the \emph{asymptotic} regime, i.e., as the number of BS antennas $N$ approaches infinity, we consider matched filter (MF) multi-antenna combining according to $\BW_m = \frac{1}{N} \bH_m$.

Before we continue, we recall the following result from probability theory, paving the way for our upcoming derivations. Let $\ba = [a_1,\dots,a_n]^{\rm T}$ and $\bb = [b_1,\dots,b_n]^{\rm T}$ be two random vectors each containing i.i.d. elements. Moreover, assume that the $i^{\rm th}$ elements of $\ba$ and $\bb$ are correlated according to $\mathbb{E}\big\{ a_i^* b_i \big\} = C_{ab}$, $i = 1,\dots,n$. Consequently, according to the law of large numbers, the sample mean $\frac{1}{n} \ba^{\rm H} \bb = \frac{1}{n} \sum_{i=1}^n a_i^* b_i$ converges almost surely to the distribution mean $C_{ab}$ as $n$ tends to infinity.

In the asymptotic regime, i.e., as $N$ tends to infinity, the elements of $\bG_{mm',nn'} = \BW_m^{\rm H} \bH_{mm',nn'}$ can be calculated using the law of large numbers. In particular, as $N$ grows large, the element $(k,k')$ of $\bG_{mm',nn'}$ converges almost surely to
\be \label{eqn:G_asymp}
G_{mm',nn'}^{k,k'} \rightarrow \mathbb{E} \Big\{ \left( H_{m}^{i,k} \right)^* H_{mm',nn'}^{i, k'} \Big\} .
\ee
To calculate the right hand side of (\ref{eqn:G_asymp}), we first find the equivalent time-domain channel impulse response \emph{after} multi-antenna combining. In particular, let $g_{mm'}^{k,k'}[n]$ denote the equivalent channel impulse response between the transmitted symbols at subcarrier $m'$  of terminal $k'$ and the received ones at subcarrier $m$ of BS output corresponding to terminal $k$ after combining\footnote{Note that we have used the letters $g$ and $G$, respectively, to denote the equivalent time and frequency channel coefficients \emph{after combining}. On the other hand, letters $h$ and $H$ have been used in (\ref{eqn:mimo_equiv_chan}), to refer to the respective channel coefficients \emph{before combining}.}. Following (\ref{eqn:mimo_equiv_chan}), we have
\begin{align} \label{eqn:equiv_chan_g} 
g_{mm'}^{k,k'}[n] = \frac{1}{N} \sum_{i=0}^{N-1} \big( H_{m}^{i,k} \big)^* \big(f_{m'}[l] \star h_{i, k'}[l] \star  f_m^\ast[-l]\big)_{\downarrow \frac{M}{2}} . 
\end{align} 
Hence, as the number of BS antennas grows large,  the asymptotic equivalent channel response can be obtained using the law of large numbers according to
\begin{align} \label{eqn:asym_equiv_chan} 
g_{mm'}^{k,k'}[n] &\rightarrow \mathbb{E} \Big\{ \left( H_{m}^{i,k} \right)^* \big(f_{m'}[l] \star h_{i, k'}[l] \star f_m^\ast[-l]\big)_{\downarrow \frac{M}{2}} \Big\} \nonumber \\
&=  \Big( f_{m'}[l] \star \mathbb{E} \Big\{ \left( H_{m}^{i,k} \right)^* h_{i,k'}[l] \Big\} \star f_m^\ast[-l]\Big)_{\downarrow \frac{M}{2}} .
\end{align} 
The above expression includes a correlation between the channel frequency coefficient $H_m^{i,k}$ and the channel impulse response $h_{i,k'}[l]$. This correlation can be calculated as
\begin{align} \label{eqn:channel_corr}
\mathbb{E} \Big\{ \left( H_{m}^{i,k} \right)^* h_{i, k'}[l] \Big\} &= \sum_{\ell = 0}^{L_{\rm h}-1} \mathbb{E} \left\{ h_{i,k}^*[\ell] h_{i,k'}[l] \right\} e^{j\frac{2 \pi \ell m}{M}}  \nonumber \\
&= p_k[l] e^{j\frac{2 \pi l m}{M}} \delta_{kk'} = p_{k,m}[l] \delta_{kk'} ,
\end{align}
where $p_{k,m}[l] \triangleq p_k[l] e^{j\frac{2 \pi l m}{M}}$ is the channel PDP of terminal $k$ modulated to the center frequency of the $m^{\rm th}$ subcarrier. The result in (\ref{eqn:channel_corr}) shows the correlation between the combiner taps at the receiver and the channel impulse responses between MTs and the BS antennas. The following proposition states the impact of this correlation on the SINR at the receiver outputs.

\begin{proposition} \label{prp:saturation1}
In an FBMC massive MIMO system, as the number of BS antennas tends to infinity, the effects of multiuser interference and noise vanish. However, some residual ISI and ICI from the same user remain even with infinite number of BS antennas. In particular, for a given user $k$, the equivalent channel impulse response between the transmitted data symbols at subcarrier $m'$ and the received ones at subcarrier $m$ tends to
\be \label{eqn:MRC_equiv_response}
g_{mm'}^{k,k}[n] \rightarrow \Big( f_{m'}[l] \star p_{k,m}[l] \star f_m^\ast[-l]\Big)_{\downarrow \frac{M}{2}} ,
\ee
which is dependent on the channel PDP. As a result, the SINR converges almost surely to 
\be \label{eqn:SINR_sat}
{\rm SINR}^k_{m,n} \rightarrow \frac{ \Re^2 \big\{ G^{k,k}_{mm,nn} \big\} }{ \mathop{\sum\limits_{n'=-\infty}^{+\infty} \sum\limits_{m'=0}^{M-1}}\limits_{(m',n')\neq(m,n)} \Re^2 \big\{ {G}^{k,k}_{mm',nn'} \big\}} ,
\ee
where $G^{k,k}_{mm',nn'} = {g}^{k,k}_{mm'}[n-n'] \hspace{2pt} e^{j(\theta_{m',n'} - \theta_{m,n})}$. The above value constitutes an upperbound for the SINR performance of the system. Hence, arbitrarily large SINR values cannot be achieved by increasing the BS array size.
\end{proposition}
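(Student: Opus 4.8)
The plan is to assemble the proposition from the two asymptotic identities already established, (\ref{eqn:asym_equiv_chan}) and (\ref{eqn:channel_corr}), and then promote the entrywise convergence of $\bG_{mm',nn'}$ to convergence of the full SINR ratio. First I would dispatch the two vanishing claims. The multiuser part is immediate: setting $k'\neq k$ in (\ref{eqn:channel_corr}) kills the correlation through the $\delta_{kk'}$ factor, so (\ref{eqn:asym_equiv_chan}) forces $g_{mm'}^{k,k'}[n]\to 0$ and every off-diagonal block of $\bG_{mm',nn'}$ collapses. The noise part needs a short second-moment computation: the $k$th entry of $\bnu'_{m,n}=\frac{1}{N}\bH_m^{\rm H}\bnu_{m,n}$ has variance $\frac{\sigma_\nu^2}{N^2}\sum_i |H_m^{i,k}|^2$, and since the PDP normalization (\ref{eqn:pdp_normalization}) yields $\mathbb{E}\{|H_m^{i,k}|^2\}=1$, the law of large numbers makes this $\approx \sigma_\nu^2/N\to 0$, so the noise power drops out of the limiting SINR.

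Next I would read off the same-user residual. Substituting $k'=k$ from (\ref{eqn:channel_corr}) into (\ref{eqn:asym_equiv_chan}) yields exactly (\ref{eqn:MRC_equiv_response}), with the modulated profile $p_{k,m}[l]$ occupying the slot that a single tap would in the interference-free case. The crucial qualitative observation is that $f_{m'}[l]\star p_{k,m}[l]\star f_m^\ast[-l]$ is \emph{not} a Nyquist pulse once the PDP spans more than one tap: convolving the designed Nyquist kernel $q[l]=f[l]\star f^\ast[-l]$ with the spread profile destroys the zero crossings at multiples of $M/2$ that were responsible for the orthogonality in (\ref{eqn:orthogonality}). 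Hence $\Re\{G_{mm',nn'}^{k,k}\}$ is generically nonzero for some $(m',n')\neq(m,n)$, which is precisely the surviving ISI/ICI.

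With the collapsed $\bG$ in hand I would then form the SINR. Feeding it into (\ref{eqn:mimo_est_d1}) and using that the data symbols are real while the off-diagonal and noise terms have vanished, the estimate for user $k$ reduces to $\hat d_{m,n}^{k}\to \sum_{n',m'}\Re\{G_{mm',nn'}^{k,k}\}\,d_{m',n'}^{k}$. Isolating the $(m',n')=(m,n)$ term as the signal and treating the remainder as interference, and invoking the unit-variance, mutually-independent data-symbol assumption, the signal and interference powers become the squared real parts of the corresponding coefficients; their ratio is exactly (\ref{eqn:SINR_sat}). Because the FIR, well-localized prototype makes $q[l]$, and therefore $g_{mm'}^{k,k}[n]$, finitely supported in both $n$ and $|m-m'|$, the double sum in (\ref{eqn:SINR_sat}) has only finitely many nonzero terms, so the SINR is a continuous function of finitely many entries of $\bG$; the almost-sure entrywise convergence then transfers to the ratio by the continuous mapping theorem.

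Finally, the upper-bound claim follows because the limit in (\ref{eqn:SINR_sat}) is a fixed deterministic number depending only on the PDP and the prototype, independent of $N$; since the interference denominator is strictly positive for any frequency-selective PDP, increasing $N$ cannot drive the SINR past this ceiling. I expect the main obstacle to be exactly this passage from entrywise convergence to convergence of the ratio: one must verify that the limiting denominator is bounded away from zero, so the ratio is well defined and the map is continuous at the limit point, and that truncating the $(m',n')$ sum to its effectively finite support is legitimate, i.e., that the tails contributed by the prototype's decay are genuinely negligible and not merely small. The remaining steps are routine given (\ref{eqn:asym_equiv_chan})--(\ref{eqn:channel_corr}).
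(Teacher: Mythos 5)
Your proof is correct and follows essentially the same route as the paper's own argument: the $\delta_{kk'}$ factor in (\ref{eqn:channel_corr}) eliminates multiuser interference (and, by the analogous variance computation, noise), while the $k'=k$ term inherits the modulated PDP $p_{k,m}[l]$, which destroys the Nyquist property underlying (\ref{eqn:orthogonality}) and leaves the deterministic residual SINR in (\ref{eqn:SINR_sat}). The paper's proof is only a sketch, and your additional details --- the explicit noise second moment, the reduction of the SINR to a ratio of squared real parts, and the continuous-mapping step with the caveat that the limiting denominator must be nonzero (i.e., the PDP must be genuinely frequency selective) --- are sound fillings of what the paper leaves implicit.
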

\begin{proof}
As suggested by (\ref{eqn:channel_corr}), when $k'\neq k$, the channel response tends to zero. Thus, multiuser interference tends to zero. A similar argument can be made for the additive noise. This results from the law of large numbers and the fact that the combiner coefficients are uncorrelated with the filtered noise samples. When $k' = k$, which implies the interference from the same user on itself, the channel response tends to (\ref{eqn:MRC_equiv_response}). Notice that due to the presence of $p_{k,m}[l]$, the orthogonality condition of (\ref{eqn:orthogonality}) does not hold anymore even with an infinite number of BS antennas. Consequently, some residual ISI and ICI remain and cause the SINR to saturate at a deterministic level given in (\ref{eqn:SINR_sat}). 
\end{proof}

We note that according to (\ref{eqn:G_asymp}), the asymptotic SINR saturation results from the statistical correlation between the multi-antenna combiner taps and the interference coefficients. This correlation is an inherent property of FBMC-based massive MIMO systems and is due to the transients of the channel impulse response since no cyclic prefix (CP) is used. In particular, when the multi-antenna combining is performed in the frequency domain according to (\ref{eqn:mrc_zf_mmse}), such correlation appears as a result of the leakage due to the absence of CP. This result is general as a similar phenomenon also emerges in massive MIMO systems based on OFDM without CP, \cite{aminjavaheri2017ofdm}.

\section{Equalization} \label{sec:prototype_modify}

As discussed in the previous section, even with an infinite number of BS antennas, some residual ICI and ISI remain due to the correlation between the combiner taps and the channel impulse responses between the MTs and the BS antennas. As a solution to this problem, in this section, we propose an efficient equalization method to remove the above correlation.

In (\ref{eqn:MRC_equiv_response}), the problematic term that leads to the saturation issue is the modulated channel PDP, $p_{k,m}[l]$. In the absence of this term, the channel response $g^{k,k}_{mm'}[n] = \big( f_{m'}[l] \star f^*_m[-l] \big)_{\downarrow \frac{M}{2}}$ does not incur any interference provided that $q[l] = f[l] \star f^*[-l]$ is a Nyquist pulse. This observation suggests that we can resolve the saturation issue by equalizing the effect of $p_{k,m}[l]$. Let $P_k(\omega)$ denote the discrete-time Fourier transform (DTFT) of $p_k[l]$. Similarly, we define $P_{k,m}(\omega) = P_k(\omega-2\pi m/M)$ as the DTFT of $p_{k,m}[l]$. This observation implies that one can equalize the effect of $p_{k,m}[l]$ by introducing a filter $\phi_{k,m}[l]$ with transfer function
\be \label{eqn:Phi}
\Phi_{k,m}(\omega) = \frac{1}{P_{k,m}(\omega)} ,
\ee
in cascade with $f_m^*[-l]$ to achieve the desired equivalent channel response $g^{k,k}_{mm'}[n] \rightarrow \big( f_{m'}[l] \star f^*_m[-l] \big)_{\downarrow \frac{M}{2}}$ in the asymptotic regime. This modifies the receiver structure as illustrated in Fig.~\ref{fig:eq_receiver1}. 

\begin{figure*}[!t]
\centering
\includegraphics[scale=0.75]{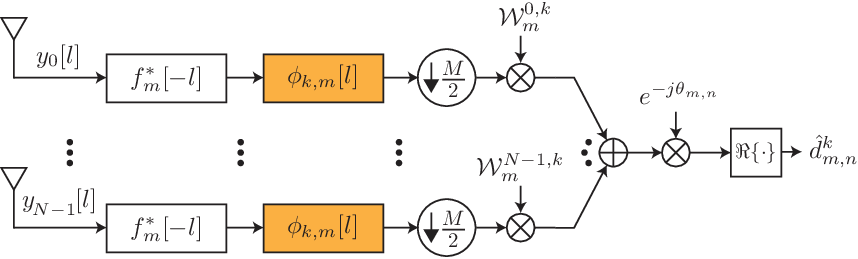}
\caption{Block diagram of the proposed receiver structure to resolve the saturation issue. Here, only the portion of the receiver corresponding to subcarrier $m$ and terminal $k$ is shown.}
\label{fig:eq_receiver1}
\end{figure*}

\begin{figure*}[!t]
\centering
\includegraphics[scale=0.75]{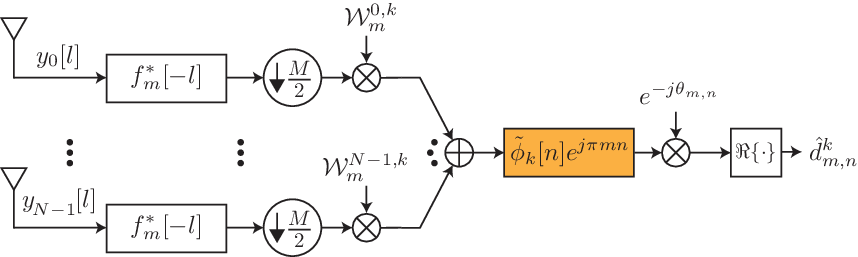}
\caption{Block diagram of the simplified receiver. Utilizing multi-rate signal processing techniques, the additional equalization block can be moved to after the analysis filter bank and combiner to minimize the computational cost.}
\label{fig:eq_receiver2}
\end{figure*}

\begin{proposition} \label{prp:self_eq}
In an FBMC massive MIMO system, as the number of BS antennas tends to infinity and by using the proposed equalization method, the channel distortions, i.e., ICI and ISI, as well as MUI and noise effects will disappear, and arbitrarily large SINR performances can be achieved.
\end{proposition}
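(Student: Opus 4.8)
The plan is to reuse the asymptotic machinery that produced Proposition~\ref{prp:saturation1}, but with the receiver matched filter $f_m^\ast[-l]$ replaced by its cascade with the equalizer $\phi_{k,m}[l]$ defined in (\ref{eqn:Phi}) and drawn in Fig.~\ref{fig:eq_receiver1}. Because $\phi_{k,m}[l]$ is a deterministic, user- and subcarrier-dependent LTI filter, it does not interact with the randomness of the channel taps; it merely appends one more convolution to the equivalent channel. Every step leading to (\ref{eqn:asym_equiv_chan})---in particular the commutation of the expectation with the deterministic convolutions and the $M/2$-fold decimation---therefore carries over unchanged, the only modification being the extra factor $\phi_{k,m}[l]$ inside the convolution.

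First I would record the equalized asymptotic self-channel of user $k$, obtained by substituting the correlation (\ref{eqn:channel_corr}) into the modified counterpart of (\ref{eqn:asym_equiv_chan}):
\be
g_{mm'}^{k,k}[n] \rightarrow \Big( f_{m'}[l] \star p_{k,m}[l] \star \phi_{k,m}[l] \star f_m^\ast[-l] \Big)_{\downarrow \frac{M}{2}}.
\ee
The crux is then a one-line frequency-domain identity: by the definition $\Phi_{k,m}(\omega) = 1/P_{k,m}(\omega)$, the cascade $p_{k,m}[l] \star \phi_{k,m}[l]$ has transfer function $P_{k,m}(\omega)\,\Phi_{k,m}(\omega) = 1$, so $p_{k,m}[l] \star \phi_{k,m}[l] = \delta[l]$. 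The offending modulated-PDP factor of (\ref{eqn:MRC_equiv_response}) is thereby annihilated and the self-channel collapses to the ideal, channel-free response $\big( f_{m'}[l] \star f_m^\ast[-l] \big)_{\downarrow \frac{M}{2}}$.

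The conclusion then follows by invoking orthogonality. Since $q[l] = f[l] \star f^\ast[-l]$ is Nyquist by design, the recovered self-channel reproduces the channel-free FBMC system, for which (\ref{eqn:orthogonality}) gives $\Re\{G^{k,k}_{mm',nn'}\} = \delta_{mm'}\delta_{nn'}$; hence all residual ISI and ICI disappear after taking the real part. The cross-user ($k'\neq k$) terms and the noise need no new argument: the correlation (\ref{eqn:channel_corr}) already vanishes for $k'\neq k$, so MUI is suppressed regardless of the equalizer, while the combined noise $\bnu'_{m,n}$ tends to zero by the same law-of-large-numbers averaging as in Proposition~\ref{prp:saturation1}, and the subsequent deterministic filtering by $\phi_{k,m}$ does not disturb this vanishing. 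For finite $N$ the SINR denominator thus comprises only residual interference and combined noise, both of which tend to zero as $N\to\infty$ while the signal term stays bounded away from zero, so the SINR grows without bound and arbitrarily large values are attainable.

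The main obstacle I anticipate is not algebraic but the well-posedness of the inverse filter (\ref{eqn:Phi}): since $p_k[l]$ is a nonnegative power delay profile, its transform $P_{k,m}(\omega)$ may possess spectral nulls at which $\Phi_{k,m}(\omega)$ blows up and, for finite $N$, sharply amplifies the post-combining noise. I would handle this by observing that the combining suppresses the noise before equalization, so a finite noise enhancement is immaterial in the limit; concretely, if $P_{k,m}$ is bounded away from zero the enhancement is a fixed multiplicative constant and the SINR still diverges with $N$, with any null-regularization being a matter for the realizable filter design rather than for this asymptotic claim. A minor secondary check is that moving $\phi_{k,m}[l]$ to the far side of the combiner, as in Fig.~\ref{fig:eq_receiver2}, is exact, which is immediate from the linearity of the combining and filtering operations.
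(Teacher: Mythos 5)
Your proposal is correct and follows essentially the same route as the paper's own proof: the inverse filter $\Phi_{k,m}(\omega)=1/P_{k,m}(\omega)$ cancels the modulated-PDP term in (\ref{eqn:MRC_equiv_response}) (your explicit identity $p_{k,m}[l]\star\phi_{k,m}[l]=\delta[l]$), restoring the ideal Nyquist response so ISI and ICI vanish, while the MUI and noise arguments carry over from Proposition~\ref{prp:saturation1} exactly as you state. Your additional remarks---commuting the deterministic filter with the expectation and the caveat about spectral nulls of $P_{k,m}(\omega)$---elaborate on points the paper's terser proof leaves implicit, but they do not constitute a different approach.
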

\begin{proof} 
Using the equalizer in (\ref{eqn:Phi}), the distortion due to the channel PDP $p_{k,m}[l]$ in the equivalent channel impulse response in (\ref{eqn:MRC_equiv_response}) is removed. Hence, the equivalent channel impulse response tends to that of an ideal channel. As a result, the effects of ICI and ISI will vanish asymptotically. 

Note that in the presence of the proposed equalizer, multiuser interference still tends to zero. This is due to the fact that the asymptotic values of the multiuser interference coefficients are given by (\ref{eqn:G_asymp}) for $k\neq k'$. Since the channels of different users are independent, the effect of multiuser interference tends to zero whether or not the proposed equalizer is in place. This argument also holds for the noise contribution since the combining coefficients and the filtered noise samples are independent.
\end{proof}

It is worth mentioning that in the above analysis, we did not make any assumption about the flatness of the channel response over the bandwidth of each subcarrier. Thus, the result obtained in Proposition \ref{prp:self_eq} is valid for any frequency-selective channel. It is worth mentioning that according to (\ref{eqn:Phi}), the proposed filter response depends on the channel PDPs. Hence, the BS needs to estimate the channel PDP for each terminal to be able to avoid the saturation issue. Fortunately, in massive MIMO systems, the channel PDP can be estimated in a relatively easy and feasible manner. In particular, the channel PDP for each terminal can be determined by calculating the mean power of each tap of the respective channel impulse responses across different BS antennas. As the number of BS antennas increases, according to the law of large numbers, this estimate becomes closer to the exact channel PDP.

Although the above method resolves the saturation problem, it may not be of practical interest as it may lead to a very complex receiver. The source of the complexity lies in the requirement of a separate filter $\phi_{k,m}[l]$ per user per antenna. Hence, the receiver front-end processing has to be repeated for each terminal separately. Next, we utilize multi-rate signal processing techniques and propose the following steps to resolve the complexity issue.

\begin{proposition}\label{prp:filter_swap} 
In an FBMC massive MIMO system, the channel PDP equalization can be performed after analysis filter bank and combiner as in Fig. \ref{fig:eq_receiver2}. Here, 
\be \label{eqn:Phi_tilde}
\tilde{\phi}_k[n] \triangleq \Big( \phi_k[l] \star \sinc(2l/M) \Big)_{\downarrow \frac{M}{2}} ,
\ee
where $\phi_k[l] \triangleq \phi_{k,0}[l]$ and $\sinc(t) \triangleq \frac{\sin(\pi t)}{\pi t}$. Note that the term $\sinc(2l/M)$ acts as an ideal low-pass filter with bandwidth $\frac{2\pi}{M}$.  
\end{proposition}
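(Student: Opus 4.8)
The plan is to establish the claim in three moves: reduce the per-subcarrier equalizer to a single baseband filter, exploit the band-limited nature of the matched-filter output to restrict that filter, and then push the restricted filter through the $M/2$ decimator with a noble-type multi-rate identity. First I would record the modulation relation implied by (\ref{eqn:Phi}): since $\Phi_{k,m}(\omega)=1/P_{k,m}(\omega)=\Phi_k(\omega-2\pi m/M)$, we have $\phi_{k,m}[l]=\phi_k[l]e^{j2\pi ml/M}$, so the equalizer for subcarrier $m$ is just the baseband equalizer $\phi_k[l]=\phi_{k,0}[l]$ modulated to the subcarrier center. Because $\phi_{k,m}[l]$ is identical for every BS antenna $i$ (it depends only on the PDP $p_k$, not on $i$), and because matched filtering, equalization, $M/2$-decimation and the linear combining by $\BW_m$ are all linear, the combiner commutes with the equalizer; it therefore suffices to treat a single antenna branch and show that the full-rate filter $\phi_{k,m}[l]$, applied to the matched-filter output before decimation, can be replaced by the low-rate filter $\tilde{\phi}_k[n]$ applied after decimation.

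Next I would use the fact that in Fig.~\ref{fig:eq_receiver1} the equalizer sees the matched-filter output $r_{i}^{(m)}[l]=(y_i\star f_m^\ast[-\cdot])[l]$, which, because the prototype $f[l]$ is well localized with essential bandwidth equal to the subcarrier spacing, is band-limited to $|\omega-2\pi m/M|<2\pi/M$. Demodulating to baseband via $\rho_i^{(m)}[l]=e^{-j2\pi ml/M}r_i^{(m)}[l]$ and pulling the modulation out of the convolution gives $(\phi_{k,m}\star r_i^{(m)})[l]=e^{j2\pi ml/M}(\phi_k\star\rho_i^{(m)})[l]$, where $\rho_i^{(m)}$ is band-limited to $|\omega|<2\pi/M$. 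Since the input to the equalizer already lives in this band, $\phi_k$ may be replaced by its band-restriction without altering the output; realizing the ideal low-pass of bandwidth $2\pi/M$ by $\sinc(2l/M)$ is exactly what introduces this function into (\ref{eqn:Phi_tilde}).

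Finally I would invoke the multi-rate identity that, for signals confined to the Nyquist band of a decimator, sampling a convolution coincides with convolving the samples. Here $2\pi/M$ is precisely the Nyquist band of the $M/2$-decimator, so decimating $(\phi_k\star\rho_i^{(m)})$ by $M/2$ equals $M/2$ times the convolution of the decimated sequences $\phi_k[nM/2]$ (band-restricted) and $\rho_i^{(m)}[nM/2]$. The factor $M/2$ produced by the decimator is exactly cancelled by the passband gain $M/2$ of $\sinc(2l/M)$, so the low-rate filter that results is precisely $\tilde{\phi}_k[n]=(\phi_k[l]\star\sinc(2l/M))_{\downarrow M/2}$, and the surviving subcarrier modulation $e^{j2\pi ml/M}$ collapses on the $M/2$-grid into the standard FBMC phase factor already compensated by $e^{-j\theta_{m,n}}$. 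Applying this to the decimated-and-combined analysis-bank output then yields Fig.~\ref{fig:eq_receiver2}.

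I expect the band-limiting step to be the main obstacle, since it is the only non-exact step: for an ideal brick-wall prototype the restriction of $\phi_k$ to $|\omega|<2\pi/M$ leaves the output unchanged, but a practical prototype leaks slightly into the neighboring bands. I would argue this residual is negligible both because $f[l]$ is designed to be well localized and because its out-of-band content is precisely the FBMC intrinsic interference, which is imaginary and is discarded by the final $\Re\{\cdot\}$ operation; keeping careful track of the $\sinc$ normalization so that it cancels the decimator's $M/2$ gain is the accompanying bookkeeping that makes (\ref{eqn:Phi_tilde}) come out with exactly the stated constants.
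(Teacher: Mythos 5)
Your proposal is correct and follows essentially the same route as the paper's proof: band-restrict the equalizer with the $\sinc(2l/M)$ low-pass (harmless because the matched-filter output is essentially confined to a $\frac{4\pi}{M}$-wide band around the subcarrier center), interchange the $\frac{M}{2}$-fold decimation with the convolution for band-limited signals (the decimator's $\frac{M}{2}$ gain being absorbed into the $\sinc$ normalization, exactly as in (\ref{eqn:Phi_tilde})), and move the resulting low-rate filter past the combiner by linearity. The one bookkeeping slip is your final phase claim: on the $\frac{M}{2}$-grid the surviving modulation becomes $e^{j\pi mn}=(-1)^{mn}$, which is \emph{not} absorbed by the compensation $e^{-j\theta_{m,n}}$ (since $\theta_{m,n}=\frac{\pi}{2}(m+n)$ here) and must be kept explicitly in the low-rate equalizer, i.e., the filter applied after the analysis bank and combiner is $\tilde{\phi}_k[n]e^{j\pi mn}$ as in Fig.~\ref{fig:eq_receiver2}.
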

\begin{proof}
The FBMC prototype filter is normally designed such that its frequency response is almost perfectly confined to the interval $[-\frac{2\pi}{M},\frac{2\pi}{M}]$. Hence, in Fig.~\ref{fig:eq_receiver1}, after filtering the incoming signal $y_m[l]$ by $f_m^*[-l]$, the frequency response of the result is almost perfectly confined to the frequency interval $\big[\frac{2\pi(m-1)}{M},\frac{2\pi(m+1)}{M}\big]$. This implies that the input to $\phi_{k,m}[l]$ is band-limited. It is intuitive that since the input to the equalizer is band-limited, the equalization processing can take place in the low rate (after decimation). Subsequently, the filtering can be moved to after the combining due to the linearity. This leads to the structure in Fig.~\ref{fig:eq_receiver2}. Note that the equalizer used for any particular subcarrier can be obtained from the one used for subcarrier $0$. In the following, we rigorously prove that the equalization can be performed after the decimation.

\begin{figure}[!t]
\centering
\includegraphics[scale=0.75]{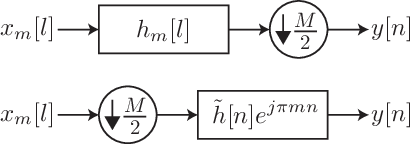}
\caption{Two equivalent systems considered in the proof of Proposition~\ref{prp:filter_swap}.}
\label{fig:filter_swap}
\end{figure}

For simplicity, consider the two systems given in Fig.~\ref{fig:filter_swap}. Here, $x[l]$ is an arbitrary band-limited signal whose spectrum is confined to the frequency interval $[-\frac{2\pi}{M},\frac{2\pi}{M}]$, and $h[l]$ is an arbitrary filter impulse response. Moreover, let $x_m[l] \triangleq x[l]e^{j\frac{2\pi ml}{M}}$ and $h_m[l] \triangleq h[l]e^{j\frac{2\pi ml}{M}}$ represent the modulated versions of $x[l]$ and $h[l]$, respectively, and $\tilde{h}[n] \triangleq \big( h[l] \star \sinc(2l/M) \big)_{\downarrow \frac{M}{2}}$ denote the band-limited and decimated version of $h[l]$. We prove that two systems shown in Fig.~\ref{fig:filter_swap} are equivalent.

First consider the top system in Fig.~\ref{fig:filter_swap}, and let $\hat{h}[l] \triangleq h[l] \star \frac{2}{M}\sinc(2l/M)$ which has the transfer function
\be
\hat{H}(\omega) = 
 \left\{
  \begin{array}{lr}
    H(\omega), & \omega \in [-\frac{2\pi}{M},\frac{2\pi}{M}], \\
    0,         & \text{else} .
  \end{array} 
\right. . \nonumber
\ee
Note that since the input signal does not have any frequency component outside of the frequency interval $\big[\frac{2\pi(m-1)}{M},\frac{2\pi(m+1)}{M}\big]$, it is possible to use the filter $\hat{h}_m[l]\triangleq \hat{h}[l]e^{j\frac{2\pi ml}{M}}$ instead of $h_m[l]$ in the top system in Fig.~\ref{fig:filter_swap}. Subsequently, after the decimation operation, the DTFT of the output signal $y[n]$ can be expressed as, \cite{vetterli2014foundations}, 
\be
Y(\omega) \hspace{-2pt} = \hspace{-3pt} \frac{2}{M} \hspace{-5pt} \sum_{k=0}^{\frac{M}{2}-1} \hspace{-5pt} X \hspace{-1pt}\Big(\frac{2\omega-2\pi(2k+m)}{M}\Big) \hat{H} \hspace{-1pt} \Big(\frac{2\omega-2\pi(2k+m)}{M}\Big) . \nonumber
\ee
Using the fact that both $X(\omega)$ and $\hat{H}(\omega)$ are band-limited to $[-\frac{2\pi}{M},\frac{2\pi}{M}]$, we find that in the summation above, only one of the terms is non-zero. In particular, for \textit{even} $m$ we have
\be
Y(\omega) = \frac{2}{M} X\Big(\frac{2\omega}{M}\Big) \hat{H}\Big(\frac{2\omega}{M}\Big), ~~~~ -\pi \leq \omega \leq +\pi , \nonumber
\ee
and for \textit{odd} $m$ we have
\be
Y(\omega) = \frac{2}{M} X\Big(\frac{2\omega-2\pi}{M}\Big) \hat{H}\Big(\frac{2\omega-2\pi}{M}\Big), ~~~~ 0 \leq \omega \leq 2\pi . \nonumber
\ee
Here, it is worth to mention that when $m$ is even, $\frac{2}{M} X\big(\frac{2\omega}{M}\big)$ and $\frac{2}{M} \hat{H}\big(\frac{2\omega}{M}\big)$ represent the DTFT of the decimated versions of $x_m[l]$ and $\hat{h}_m[l]$, respectively. Similarly, when $m$ is odd, $\frac{2}{M} X\big(\frac{2\omega-2\pi}{M}\big)$ and $\frac{2}{M} \hat{H}\big(\frac{2\omega-2\pi}{M}\big)$ express the DTFT of the decimated versions of $x_m[l]$ and $\hat{h}_m[l]$, respectively. Consequently, instead of passing $x_m[l]$ through the filter $\hat{h}_m[l]$ and decimating the result, one can decimate both $x_m[l]$ and $\hat{h}_m[l]$ separately, and then convolve them together in the low rate. Before we finish the proof, we just aim to derive the decimated version of $\hat{h}_m[l]$ in terms of $h[l]$. We have
\begin{align*}
\frac{M}{2} \left( \hat{h}_m[l] \right)_{\downarrow \frac{M}{2}} &= \frac{M}{2} \left( \left( h[l] \star \frac{2}{M} \sinc(2l/M) \right) e^{j\frac{2\pi ml}{M}}  \right)_{\downarrow \frac{M}{2}} \\
&= \tilde{h}[n] e^{j\pi mn}  .
\end{align*}
This results in the system given in Fig.~\ref{fig:filter_swap}. This completes the proof.
\end{proof}

As suggested by the above proposition, one can incorporate the receiver structure shown in Fig.~\ref{fig:eq_receiver2} to resolve the saturation issue in an efficient manner. In particular, after the analysis filter bank and multi-antenna combining, the filter $\tilde{\phi}_k[n] e^{j\pi m n}$ can be incorporated to equalize the effect of the problematic term $p_{k,m}[l]$ in (\ref{eqn:MRC_equiv_response}). Note that in this approach, the main parts of the receiver front-end including the analysis filter bank and the multi-antenna combiner will remain unchanged. The advantages of this simplified structure as compared to the previous one include: (i) The analysis filter bank is common for all terminals and can be performed once. (ii) The additional equalizer has a very short length since it is performed at the low rate after decimation, and (iii) the equalizer is performed after the multi-antenna combining, hence, its computational cost is independent of the number of BS antennas.

Before we end this section, we note that according to (\ref{eqn:MRC_equiv_response}), a frequency shifted version of the power delay profile $p_{k}[l]$ distorts the equivalent channel. As a result, only the frequency response of $p_k[l]$ limited to the interval $\omega \in \left[ -\frac{2\pi}{M},+\frac{2\pi}{M}\right]$ affects the respective equivalent channel response. This interval corresponds to the width of a single subcarrier.

\section{Frequency-Domain Perspective} \label{sec:self-equalization}


In this section, we aim at studying the results of the previous sections from the frequency-domain point of view. As we show, this study leads to a deeper understanding of FBMC in massive MIMO channels.

In OFDM-based systems, presence of the CP greatly simplifies the equalization procedure. In particular, as long as the length of the CP is larger than the duration of channel impulse response, one can utilize a single-tap equalizer per subcarrier to undo the effect of the channel and retrieve the transmitted data symbols. On the other hand, in FBMC-based systems, since no CP is adopted, single-tap equalization does not fully compensate the  channel frequency-selectivity across subcarrier bands. However, assuming that the number of subcarriers is sufficiently large so that the channel frequency response is approximately flat over each subcarrier band, then the model described by (\ref{eqn:siso_demod_symbol2}) is going to be valid. Therefore, the task of equalization can be simplified by using single-tap equalization per subcarrier.

In this section, we aim at discussing the fact that in massive MIMO systems, by using the equalization method developed in Section \ref{sec:prototype_modify}, it is not necessary to have a flat channel response over the band of each subcarrier in order to use single-tap equalizer. In particular, by using the simple single-tap per subcarrier equalization even in strong frequency selective channels and by incorporating a large number of antennas at the BS, the effective channel response becomes flat. It is clear that this property has a number of advantages from the system implementation point of view. In particular, since there is no need for flat-fading assumption over the band of each subcarrier, one can \emph{widen} the subcarrier widths (or equivalently decrease the symbol duration). Consequently, the following advantages can be achieved, \cite{armanfarhang2014filter}. 
\begin{enumerate}
\item The sensitivity to carrier frequency offset (CFO) in the uplink of multiple access networks is decreased by widening the subcarrier bands.
\item The peak-to-average power ratio (PAPR) is lowered, which leads to larger coverage and higher battery efficiency in mobile terminals. This is a direct consequence of reducing the number of subcarriers in a synthesized signal.
\item The sensitivity to channel time variations within the FBMC symbol duration is reduced. This advantage arises from the reduction of the symbol duration. As a result, a higher quality of service is expected in highly time-varying channels such as in high speed trains. 
\item The latency between the terminals and the BS is decreased, as a result of shorter symbol durations. This is crucial for addressing the low-latency requirements of the 5G networks.
\item The inefficiency due to the ramp-up and ramp-down of the prototype filter at the beginning and the end of each packet, especially in bursty communications, is decreased. This results from the shortening of the symbol duration which in turn leads to a shorter prototype filter in the time domain, \cite{farhang2016comparison}.
\end{enumerate}

\begin{figure*}[!t]
\centering
\subfigure[]{\includegraphics[scale=0.68]{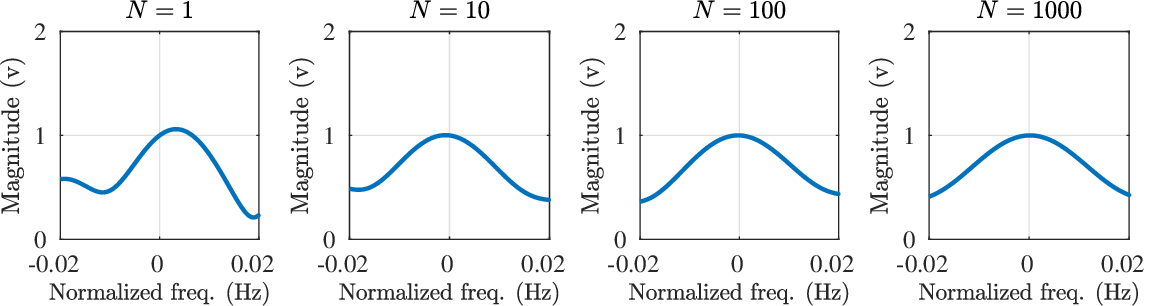}%
\label{fig:self_eqa}}
\par\medskip
\subfigure[]{\includegraphics[scale=0.68]{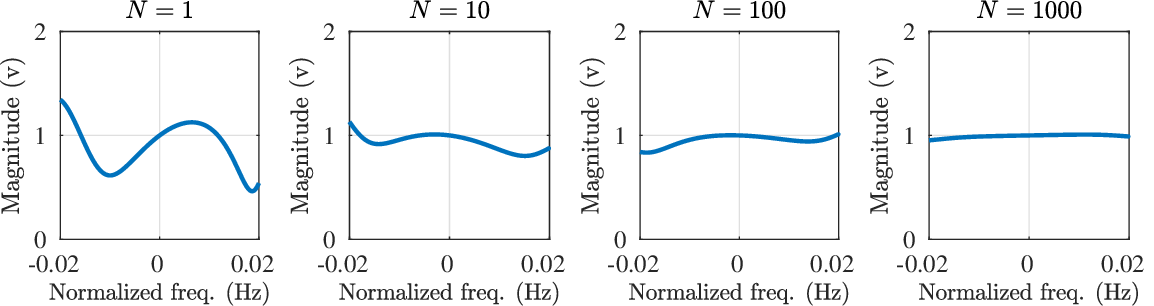}%
\label{fig:self_eqb}}
\caption{Illustration of the equivalent channel response. Here, we assume $M=512$, and consider an exponentially decaying channel PDP with the decaying factor of $0.06$ and the length of $L_{\rm h} = 50$. (a) The equivalent channel, $C_m^{k,k}(\omega)$, for subcarrier $m=0$, without the proposed equalizer. (b) The equivalent channel, $\tilde{C}_m^{k,k}(\omega)$, for subcarrier $m=0$, with the proposed equalizer. As the number of BS antennas increases, the equivalent channel becomes flat only when the proposed equalizer is in place.}
\label{fig:self_eq}
\end{figure*}


Following (\ref{eqn:equiv_chan_g}), we can obtain the frequency response of the equivalent channel after combining. To this end, consider a given terminal $k$, and let ${G}_{mm'}^{k,k}(\omega)$ denote the frequency response of the high-rate (i.e., without decimation) equivalent channel between the transmitted symbols at subcarrier $m'$ and the received ones at subcarrier $m$. We have
\begin{align} \label{eqn:Homega}
{G}_{mm'}^{k,k}(\omega) &= \frac{1}{N}  \sum_{i=0}^{N-1} \left( H_{m}^{i,k} \right)^* F_{m'}(\omega) \hspace{1pt} H_{i, k}(\omega) \hspace{1pt} F_m^\ast(\omega)  \nonumber \\
&= C_m^{k,k}(\omega) F_{m'}(\omega) F_m^\ast(\omega) ,
\end{align}
where 
\be \label{eqn:C}
C_m^{k,k}(\omega) = \frac{1}{N}  \sum\limits_{i=0}^{N-1} \left( H_{m}^{i,k} \right)^* H_{i,k}(\omega) .
\ee
In (\ref{eqn:Homega}), $F_{m'}(\omega)$ and $F_m^*(\omega)$ are two modulated square-root Nyquist filters, i.e., $Q(\omega) = |F(\omega)|^2$ is a Nyquist pulse, and ${C}_m^{k,k}(\omega)$ is due to the \emph{multipath channel}. Ideally, $C_m^{k,k}(\omega)$ should be flat over the pass band of the subcarrier $m$ so that the symbols of subcarrier $m$ can be perfectly reconstructed without any interference. However, when there exists a frequency-selective channel, the term $C_m^{k,k}(\omega)$ may incur some distortion over the pass band of subcarrier $m$ and, accordingly, lead to some interference in the detected symbols. As the number of BS antennas grows large, using the law of large numbers and according to (\ref{eqn:channel_corr}), $C_m^{k,k}(\omega)$ tends to $P_{k,m}(\omega)$. Therefore, the flat-fading condition may not be achieved by just increasing the BS array size.

On the other hand, when the equalizer in (\ref{eqn:Phi}) is utilized, the equivalent channel in the frequency domain can be expressed as 
\be
\tilde{G}_{mm'}^{k,k}(\omega) = \tilde{C}_m^{k,k}(\omega) F_{m'}(\omega) F_m^\ast(\omega),
\ee
where
\begin{align} \label{eqn:Ctilde}
\tilde{C}_m^{k,k}(\omega) = \frac{ \frac{1}{N} \sum\limits_{i=0}^{N-1} \left( H_{m}^{i,k} \right)^* H_{i,k}(\omega)}{ P_{k,m}(\omega)} = \frac{C_m^{k,k}(\omega)}{P_{k,m}(\omega)}.
\end{align}
Therefore, since $C_m^{k,k}(\omega)$ asymptotically tends to $P_{k,m}(\omega)$, $\tilde{C}_m^{k,k}(\omega)$ will in turn tend to a frequency flat channel. Thus, no interference is expected in large antenna regime. This channel flattening effect of FBMC-based massive MIMO systems is illustrated in Fig.~\ref{fig:self_eq}.

\section{SINR Analysis} \label{sec:sinr}

In this section, we analyze the SINR performance of an FBMC-based massive MIMO system in the uplink incorporating the proposed equalization method. We limit our study to the two most prominent linear combiners namely, MRC and ZF. As mentioned earlier, in the large antenna regime, all the combiners in (\ref{eqn:mrc_zf_mmse}) tend to $\frac{1}{N} \bH_m$, and hence, the same asymptotic SINR performance as in MRC and ZF is expected for the MMSE combiner. As mentioned earlier, the equalization approaches given in Figs.~\ref{fig:eq_receiver1} and \ref{fig:eq_receiver2} are equivalent. Although the method given in Fig.~\ref{fig:eq_receiver2} is preferred for implementation, here, for the purpose of analysis, we consider the approach given in Fig.~\ref{fig:eq_receiver1}. Throughout this section, we consider normalized channel PDPs for each terminal such that 
\be \label{eqn:pdp_normalization}
\sum_{l=0}^{L_{\rm h}-1} p_k[l] = 1, ~~~~ k \in \{0,\dots,K-1\} .
\ee

In Fig.~\ref{fig:eq_receiver1}, the receiver filter $f^*_m[-l]$ and the equalizer $\phi_{k,m}[l]$ can be combined together as a single filtering block with impulse response $\tilde{f}^*_{k,m}[-l] \triangleq f^*_m[-l] \star \phi_{k,m}[l]$. Therefore, we can consider having the new receiver filter $\tilde{f}^*_{k,m}[-l]$ in place, and use (\ref{eqn:mimo_est_d1}) to obtain the estimated data symbols. To this end, let $\tilde{\bH}_{mm',nn'}^k$ be an $N \times K$ matrix with elements given by (\ref{eqn:mimo_equiv_chan}) but with the new filter $\tilde{f}^*_{k,m}[-l]$ in place instead of $f^*_{m}[-l]$. Moreover, we form the $K \times K$ matrix $\tilde{\bG}_{mm',nn'}$ similar to $\bG_{mm',nn'}$. In particular, the $k^{\rm th}$ row of $\tilde{\bG}_{mm',nn'}$ can be calculated as $\bw_{m,k}^{\rm H} \tilde{\bH}_{mm',nn'}^k$, where $\bw_{m,k}$ is the $k^{\rm th}$ column of the combiner matrix $\BW_m$. Following the above definitions, the interference coefficients are determined by the real part of the elements of $\tilde{\bG}_{mm',nn'}$. In order to pave the way for our SINR analysis, we desire to find the elements of $\tilde{\bG}_{mm',nn'}$ in a matrix form. Towards this end and based on (\ref{eqn:mimo_equiv_chan}) and (\ref{eqn:mimo_est_d1}), the convolution, downsampling, multi-antenna combining, and phase compensation operations can all be expressed compactly as
\be \label{eqn:G_elements}
\tilde{G}_{mm',nn'}^{k,k'} = \big( \bpsi_{mm',nn'}^k \big)^{\rm H} \hspace{3pt} \bg_m^{k,k'} , 
\ee
where
\be \label{eqn:g}
\bg^{k,k'}_m = \sum_{i=0}^{N-1} \left( \mathcal{W}_{m}^{i,k} \right)^* \bh_{i,k'} ,
\ee
and
\be 
\big( \bpsi_{mm',nn'}^k \big)^{\rm H} = e^{j(\theta_{m',n'} - \theta_{m,n})} \hspace{2pt} \e_{nn'}^{\rm T} \tilde{\bF}_{k,m}  \bF_{m'} .
\ee
The vector $\bg^{k,k'}_m$ is the effective multipath channel impulse response between terminals $k$ and $k'$ at subcarrier $m$, after the combining operation. $\bh_{i,k} \triangleq \big[h_{i,k}[0], \dots, h_{i,k}[L_{\rm h}-1]\big]^{\rm T}$ is the vector of channel impulse response between $i^{\rm th}$ BS antenna and $k^{\rm th}$ terminal. $\bF_{m'}$ and $\tilde{\bF}_{k,m}$ are two Toeplitz matrices that are defined in (\ref{eqn:Pm}) and (\ref{eqn:Ptildem}), respectively, and signify the synthesis filter at subcarrier $m'$ and the new analysis filter at subcarrier $m$, respectively. Note that the size of the matrix $\bF_{m'}$ is $(L_{\rm f} + L_{\rm h} - 1) \times L_{\rm h}$. To determine the size of $\tilde{\bF}_{k,m}$, we follow (\ref{eqn:Phi}) to note that $f_m[l] = \tilde{f}_{k,m}[l] \star p_{k,m}^*[-l]$. Hence, the length of the new filter $\tilde{f}_{k,m}[l]$ can be obtained as $L_{\tilde{\rm f}} = L_{\rm f} - L_{\rm h} + 1$. As a result, the size of $\tilde{\bF}_{k,m}$ can be calculated as $(2L_{\rm f}-1) \times (L_{\rm f}+L_{\rm h}-1)$. The $(2L_{\rm f}-1) \times 1$ vector $\e_{nn'}$ is accounted for the downsampling operation and contains zeros except on its $(L_{\rm f}+(n-n')\frac{M}{2})^{\rm th}$ entry which is equal to one. Finally, $e^{j(\theta_{m',n'} - \theta_{m,n})}$ is due to the phase compensation. 
\bse
\begin{align} \setlength{\arraycolsep}{4pt}
\mathbf{F}_{m'} \hspace{-0.5mm} = \hspace{-1mm} \begin{pmatrix}
f_{m'}[0] & 0 & \cdots & 0 & 0 \\ 
f_{m'}[1] & f_{m'}[0] & \cdots & 0 & 0 \\
\vdots & \vdots & \vdots & \vdots & \vdots\\
0 & 0 & \cdots &  f_{m'}[L_{\rm f}-1] & f_{m'}[L_{\rm f}-2]\\ 
0 & 0 & \cdots &  0 & f_{m'}[L_{\rm f}-1]
\end{pmatrix} , \label{eqn:Pm} 
\end{align}
\begin{align} \setlength{\arraycolsep}{2pt}
\tilde{\mathbf{F}}_{k,m} = \begin{pmatrix}
\tilde{f}^*_{k,m}[L_{\tilde{\rm f}}-1] & 0 & \cdots & 0 & 0 \\ 
\tilde{f}^*_{k,m}[L_{\tilde{\rm f}}-2] & \tilde{f}^*_{k,m}[L_{\tilde{\rm f}}-1] & \cdots & 0 & 0 \\
\vdots & \vdots & \vdots & \vdots & \vdots\\
0 & 0 & \cdots &  \tilde{f}^*_{k,m}[0] & \tilde{f}^*_{k,m}[1]\\ 
0 & 0 & \cdots &  0 & \tilde{f}^*_{k,m}[0]
\end{pmatrix} . \label{eqn:Ptildem}
\end{align}
\ese

Note that in (\ref{eqn:G_elements}), the term $ \bpsi_{mm',nn'}^k $ is completely deterministic, whereas $\bg_{m}^{k,k'}$ is a random vector. Therefore, in this equation, we have decomposed the interference coefficients into random and deterministic components. Moreover, while $\bpsi_{mm',nn'}^k $ does not depend on the type of combining, $\bg_{m}^{k,k'}$ is directly related to the combining method and should be evaluated for each combiner separately.

\subsection{MRC} \label{sec:sinr_mrc}

In MRC, as the number of BS antennas grows large, $\bD_m$ in (\ref{eqn:mrc_zf_mmse}) tends to $N \eye_K$. Therefore, we can write $\bg_m^{k,k'} =  \frac{1}{N} \sum_{i=0}^{N-1} \left( H_{m}^{i,k} \right)^* \bh_{i,k'}$. In the Appendix, we have calculated the first and second order statistics of the complex random vector $\bg_m^{k,k'}$. The result is
\bse \label{eqn:g_mrc_stats} 
\begin{flalign} 
\bmu_m^{k,k'} \hspace{-4pt} &\triangleq \hspace{-2pt} \mathbb{E} \big\{ \bg^{k,k'}_m \big\} = \delta_{kk'} \bp_{k,m}, \label{eqn:g_mrc_mean}\\
\bGamma_{m}^{k,k'} \hspace{-4pt} &\triangleq \hspace{-2pt} \mathbb{E} \Big\{ \hspace{-2pt} \big( \bg^{k,k'}_m  \hspace{-3pt} -  \hspace{-2pt} \bmu_m^{k,k'} \big) \hspace{-1pt} \big(\bg^{k,k'}_m \hspace{-3pt} - \hspace{-2pt}  \bmu_m^{k,k'} \big)^{\mathrm{H}}\Big\} \hspace{-2pt} = \hspace{-2pt} \frac{1}{N} \bD_{{\rm p}_{k'}}, \label{eqn:g_mrc_cov} \\
\bK_{m}^{k,k'} \hspace{-4pt} &\triangleq \hspace{-2pt} \mathbb{E} \Big\{ \hspace{-2pt} \big( \bg^{k,k'}_m \hspace{-3pt} - \hspace{-2pt} \bmu_m^{k,k'} \big) \hspace{-1pt} \big(\bg^{k,k'}_m \hspace{-3pt} - \hspace{-2pt} \bmu_m^{k,k'} \big)^{\mathrm{T}}\Big\} \hspace{-2pt} = \hspace{-2pt} \frac{1}{N} \delta_{kk'} \bp_{k,m} \bp_{k,m}^{\rm T}, \label{eqn:g_mrc_rel}
\end{flalign} 
\ese 
where $\bD_{{\rm p}_k} \triangleq {\rm diag} \big\{\big[p_k[0],p_k[1],\dots,p_k[L_{\rm h}-1]\big]^{\rm T}\big\}$, and $\bp_{k,m} \triangleq \big[p_{k,m}[0],p_{k,m}[1],\dots,p_{k,m}[L_{\rm h}-1]\big]^{\rm T}$.

Let $\bgamma^{k,k'}_m$ be a zero-mean random vector defined as ${\bgamma}^{k,k'}_m \triangleq \bg^{k,k'}_m - \bmu^{k,k'}_m$ . Thus, from (\ref{eqn:G_elements}) and (\ref{eqn:g_mrc_mean}) we have 
\begin{align} \label{eqn:G_elements2}
&\tilde{G}_{mm',nn'}^{k,k'}  \nonumber \\
&= \big( \bpsi_{mm',nn'}^k \big)^{\rm H} \bgamma^{k,k'}_m + \delta_{kk'} \big( \bpsi_{mm',nn'}^k \big)^{\rm H} \bp_{k,m} \nonumber \\
&= \big( \bpsi_{mm',nn'}^k \big)^{\rm H} \bgamma^{k,k'}_m + \delta_{kk'} \big( \delta_{mm'} \delta_{nn'} + jA_{mm',nn'} \big) ,
\end{align}
where $A_{mm',nn'} \triangleq \Im \big\{ \sum_{l=-\infty}^{+\infty} a_{m',n'}[l] a^*_{m,n}[l] \big\}$. The second line of (\ref{eqn:G_elements2}) follows from the real-orthogonality property of FBMC given in (\ref{eqn:orthogonality}). We recall that by incorporating the equalizer $\phi_{k,m}[l]$, the effect of the modulated channel PDP $p_{k,m}[l]$ is removed and the real-orthogonality condition is satisfied. Hence, the term $ \big( \bpsi_{mm',nn'}^k \big)^{\rm H} \bp_{k,m} = e^{j(\theta_{m',n'} - \theta_{m,n})} \hspace{2pt} \e_{nn'}^{\rm T} \tilde{\bF}_{k,m}  \bF_{m'} \bp_{k,m}$ is equal to $\delta_{mm'} \delta_{nn'} + j A_{mm',nn'}$ since the matrix $\tilde{\bF}_{k,m}$ compensates the effect of $\bp_{k,m}$.

As mentioned above, the interference coefficients are given by the \emph{real part} of the elements of $\tilde{\bG}_{mm',nn'}$. Let $\bR_{mm',nn'} \triangleq \Re \{ \tilde{\bG}_{mm',nn'} \}$, and $\bnu_{m,n}'' \triangleq \Re \{ \bnu'_{m,n} \}$. Accordingly, (\ref{eqn:mimo_est_d1}) can be reformulated as
\begin{align} \label{eqn:mimo_est_d1_real}
\hat{\bd}_{m,n} = \sum_{n'=-\infty}^{+\infty} \sum_{m'=0}^{M-1}  \bR_{mm',nn'} \bd_{m',n'} + \bnu_{m,n}'' .
\end{align}
By stacking the real and imaginary parts of the matrices and vectors that constitute the elements of $\tilde{\bG}_{mm',nn'}$, it is possible to find an expression for the elements of $\bR_{mm',nn'}$. In particular, for an arbitrary complex matrix or vector $\ba$, we define $\check{\ba} \triangleq \left[ \Re \{\ba^{\rm T} \}, \Im \{\ba^{\rm T} \} \right]^{\rm T}$. Thus, following (\ref{eqn:G_elements2}) we can find the elements of $\bR_{mm',nn'}$ as
\be \label{eqn:GR_elements}
R_{mm',nn'}^{k,k'} = \big( \check{\bpsi}_{mm',nn'}^k \big)^{\rm T}  \check{\bgamma}_m^{k,k'} + \delta_{kk'} \delta_{mm'} \delta_{nn'}.
\ee
We note that the real-valued random vector $\check{\bgamma}^{kk'}_m$ is zero-mean and its covariance matrix can be determined using (\ref{eqn:g_mrc_stats}) as
\begin{align} 
\bC_{m}^{k,k'} &\triangleq \mathbb{E} \Big\{ \big( \check{\bgamma}^{k,k'}_m - \mathbb{E} \big\{ \check{\bgamma}^{k,k'}_m \big\} \big) \big(\check{\bgamma}^{k,k'}_m - \mathbb{E} \big\{ \check{\bgamma}^{k,k'}_m \big\} \big)^{\mathrm{T}} \Big\} \nonumber \\
&= \frac{1}{2} \begin{bmatrix}
\Re \{ \bGamma_{m}^{k,k'} + \bK_{m}^{k,k'} \}  & \Im \{- \bGamma_{m}^{k,k'} + \bK_{m}^{k,k'} \} \\ 
\Im \{ \bGamma_{m}^{k,k'} + \bK_{m}^{k,k'} \} & \Re \{ \bGamma_{m}^{k,k'} - \bK_{m}^{k,k'} \} 
\end{bmatrix} \nonumber \\
&= \frac{1}{N} \big( {\BD}_{{\rm p}_{k'}} + \delta_{kk'} \BP_{k,m} \big) , \label{eqn:C_mrc}
\end{align}
where 
$
{\BD}_{{\rm p}_{k'}} \triangleq \frac{1}{2} \begin{bmatrix}
\bD_{{\rm p}_{k'}} & \bzero \\ 
\bzero             & \bD_{{\rm p}_{k'}}
\end{bmatrix} 
$
and
$
\BP_{k,m}  \triangleq \frac{1}{2} \begin{bmatrix}
\Re \{ \bp_{k,m} \bp_{k,m}^{\rm T} \} &  \Im \{ \bp_{k,m} \bp_{k,m}^{\rm T} \} \\ 
\Im \{ \bp_{k,m} \bp_{k,m}^{\rm T} \} & -\Re \{ \bp_{k,m} \bp_{k,m}^{\rm T} \} 
\end{bmatrix} 
$ .

Following (\ref{eqn:GR_elements}), the instantaneous power corresponding to $R_{mm',nn'}^{k,k'}$ can be calculated as
\begin{align} 
P_{mm',nn'}^{k,k'} &= \left( R_{mm',nn'}^{k,k'} \right)^2  \nonumber \\
&= \big( \check{\bgamma}^{k,k'}_m \big)^{\rm T} \bPsi_{mm',nn'}^k \check{\bgamma}^{k,k'}_m + \delta_{kk'} \delta_{mm'} \delta_{nn'} \nonumber \\
& \hspace{5mm} + 2\delta_{kk'} \delta_{mm'} \delta_{nn'} \hspace{2pt} \big( \check{\bpsi}_{mm',nn'}^k \big)^{\rm T} \check{\bgamma}^{k,k'}_m , \nonumber
\end{align}
where $\bPsi_{mm',nn'}^k \triangleq \check{\bpsi}_{mm',nn'}^k \big( \check{\bpsi}_{mm',nn'}^k \big)^{\rm T}$. From the above equation, the average power, with averaging over different channel realizations, can be calculated according to \cite[p.~53]{mathai1992quadratic},
\begin{align}
&\bar{P}_{mm',nn'}^{k,k'} \nonumber \\
&\-\ = {\rm tr} \Big\{ \bC_m^{k,k'} \bPsi_{mm',nn'}^k \Big\} + \delta_{kk'} \delta_{mm'} \delta_{nn'} \nonumber \\
&\-\ = \frac{1}{N} {\rm tr} \left\{ ( \BD_{{\rm p}_{k'}} + \delta_{kk'} \BP_{k,m})  \bPsi_{mm',nn'}^k \right\} + \delta_{kk'} \delta_{mm'} \delta_{nn'} . \label{eqn:avg_power_mrc}
\end{align}
Thus, the SINR can be calculated as given in the following proposition.

\begin{figure*}[!t]
MRC:
\begin{align} \label{eqn:SINR_mrc}
{\rm SINR}_{m,n}^k = \frac{N + {\rm tr} \big\{ \big( \BD_{{\rm p}_{k}} + \BP_{k,m} \big)  \bPsi_{mm,nn}^k  \big\} }{ \sum\limits_{\substack{k'=0 \\ k' \neq k}}^{K-1} \sum\limits_{n'=-\infty}^{+\infty} \sum\limits_{m'=0}^{M-1} {\rm tr} \big\{ \BD_{{\rm p}_{k'}} \bPsi_{mm',nn'}^k  \big\} + \mathop{\sum\limits_{n'=-\infty}^{+\infty} \sum\limits_{m'=0}^{M-1}}\limits_{(m',n')\neq(m,n)}  {\rm tr} \big\{ \big( \BD_{{\rm p}_{k}} + \BP_{k,m} \big)  \bPsi_{mm',nn'}^k  \big\} + \sigma_\nu^2} 
\end{align}
ZF:
\begin{align} \label{eqn:SINR_zf}
{\rm SINR}_{m,n}^{k} = \frac{N - K }{ \sum\limits_{k'=0}^{K-1} \hspace{-0pt} \mathop{\sum\limits_{n'=-\infty}^{+\infty} \sum\limits_{m'=0}^{M-1}}\limits_{(m',n')\neq(m,n)}  {\rm tr} \big\{ \big( \BD_{{\rm p}_{k'}} - \tilde{\BP}_{k',m} \big)  \bPsi_{mm',nn'}^k  \big\} + \sigma_\nu^2} 
\end{align}
\hrulefill
\end{figure*}

\begin{proposition} \label{prp:sinr_mrc}
In the uplink of an FBMC massive MIMO system with MRC combiner and the proposed PDP equalizer, the effective SINR can be calculated according to (\ref{eqn:SINR_mrc}) on the top of the next page.
\end{proposition}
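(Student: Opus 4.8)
The plan is to assemble the SINR directly from the per-coefficient average powers already obtained in (\ref{eqn:avg_power_mrc}), by sorting the terms of the expansion (\ref{eqn:mimo_est_d1_real}) into a desired-signal term, interference terms, and a noise term. First I would fix the target user $k$ and the time-frequency point $(m,n)$ and isolate the $k$-th entry of (\ref{eqn:mimo_est_d1_real}): the estimate $\hat{d}^k_{m,n}$ equals $\sum_{n'}\sum_{m'}\sum_{k'} R^{k,k'}_{mm',nn'}\, d^{k'}_{m',n'}$ plus the $k$-th noise entry $\Re\{\bw^{\rm H}_{m,k}\bnu_{m,n}\}$. Since the real data symbols $d^{k'}_{m',n'}$ are zero-mean, mutually independent across $(k',m',n')$, independent of the channel (hence of the coefficients $R^{k,k'}_{mm',nn'}$), and of unit power, the expected squared output decomposes into a sum of the quantities $\mathbb{E}\{(R^{k,k'}_{mm',nn'})^2\} = \bar P^{k,k'}_{mm',nn'}$ already evaluated in (\ref{eqn:avg_power_mrc}), plus the noise power.

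Second, I would designate the single term with $k'=k$, $m'=m$, $n'=n$ as the desired signal and assign its entire average power $\bar P^{k,k}_{mm,nn} = \frac{1}{N}{\rm tr}\{(\BD_{{\rm p}_k}+\BP_{k,m})\bPsi^k_{mm,nn}\}+1$ to the numerator. The remaining terms form the interference, which splits cleanly according to the factor $\delta_{kk'}$ in (\ref{eqn:avg_power_mrc}): the multiuser part ($k'\neq k$, all $(m',n')$) retains only ${\rm tr}\{\BD_{{\rm p}_{k'}}\bPsi^k_{mm',nn'}\}/N$, while the self-ISI/ICI part ($k'=k$, $(m',n')\neq(m,n)$) keeps the full ${\rm tr}\{(\BD_{{\rm p}_k}+\BP_{k,m})\bPsi^k_{mm',nn'}\}/N$ once the Kronecker term vanishes. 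These reproduce the two interference sums in the denominator of (\ref{eqn:SINR_mrc}).

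Third, I would compute the noise power, i.e.\ the second moment of the $k$-th entry of $\bnu''_{m,n}$, namely $\mathbb{E}\{(\Re\{\bw^{\rm H}_{m,k}\bnu_{m,n}\})^2\}$, where $\bw_{m,k}$ is the $k$-th column of the MRC combiner $\BW_m = \bH_m\bD_m^{-1}$. Conditioning on the channel and using the independence and whiteness of the demodulated noise together with $\|\bw_{m,k}\|^2 = 1/D^{k,k}_m \to 1/N$, this second moment reduces to a constant multiple of $\sigma_\nu^2/N$. Finally, forming the ratio of the signal power to the sum of the two interference sums and this noise term, and multiplying numerator and denominator by $N$ to clear the common $1/N$ factor, collapses the constant $1$ in $\bar P^{k,k}_{mm,nn}$ into the leading $N$ of the numerator and turns the noise term into $\sigma_\nu^2$, which is exactly (\ref{eqn:SINR_mrc}).

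I expect the main obstacle to lie in the noise term rather than in the interference bookkeeping. Matching its constant to the single $\sigma_\nu^2$ appearing after the $N$-scaling requires carefully tracking the matched-filter energy normalization, the factor of one-half introduced by the real-part operation relative to the complex second moment, and the symbol-power convention, so that all of these constants cancel cleanly. The interference part, by contrast, is a mechanical separation of the already-derived (\ref{eqn:avg_power_mrc}) into its $\delta_{kk'}$ cases; the only care needed there is to retain the $\BP_{k,m}$ contribution in the self-interference terms while dropping it from the multiuser terms, exactly as dictated by the factor $\delta_{kk'}$ multiplying $\BP_{k,m}$ in (\ref{eqn:avg_power_mrc}).
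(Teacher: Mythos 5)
Your proposal is correct and follows essentially the same route as the paper's own (very terse) proof: both identify the signal power with $\bar{P}^{k,k}_{mm,nn}$ from (\ref{eqn:avg_power_mrc}), split the remaining coefficients into self-interference ($k'=k$, $(m',n')\neq(m,n)$) and multiuser ($k'\neq k$) sums via the $\delta_{kk'}$ factor, add the combined noise power, and clear the common $1/N$. Your extra care with the noise term is warranted---the paper writes $\sigma_\nu^2$ both before and after the $N$-scaling, which is only consistent if the post-combining noise power is $\sigma_\nu^2/N$ (i.e.\ $\|\bw_{m,k}\|^2\to 1/N$), exactly the point you flag.
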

\begin{proof}
This follows from (\ref{eqn:avg_power_mrc}), and noting that 
\begin{align*}
&{\rm SINR}_{m,n}^k \\
&\triangleq\frac{\bar{P}_{mm,nn}^{k,k}}{ \mathop{\sum\limits_{n'=-\infty}^{+\infty} \sum\limits_{m'=0}^{M-1}}\limits_{(m',n')\neq(m,n)} \bar{P}_{mm',nn'}^{k,k'} + \hspace{-8pt} \sum\limits_{n'=-\infty}^{+\infty} \sum\limits_{m'=0}^{M-1} \sum\limits_{\substack{k'=0 \\ k'\neq k}}^{K-1}  \bar{P}_{mm',nn'}^{k,k'} + \sigma_\nu^2} . 
\end{align*}
\end{proof}

\subsection{ZF} \label{sec:sinr_zf}

In the Appendix, it is shown that for the ZF combiner, provided that $N \geq K+1$, the first and second order statistics of the random vector $\bg_m^{k,k'}$ can be calculated according to
\bse \label{eqn:g_zf_stats}
\begin{align} 
\bmu_m^{k,k'} &= \delta_{kk'} \bp_{k,m}, \label{eqn:g_zf_mean} \\
\bGamma_{m}^{k,k'} &= \frac{1}{N-K} \left( \bD_{{\rm p}_{k'}} - \bp_{k',m} \bp_{k',m}^{\rm H} \right) , \label{eqn:g_zf_cov} \\
\bK_{m}^{k,k'} &= \bzero . \label{eqn:g_zf_rel}
\end{align}
\ese
Hence, the covariance matrix of $\check{\bgamma}_m^{k,k'}$ is determined by
\be \label{eqn:C_zf} 
\bC_{m}^{k,k'} = \frac{1}{N-K} \big( \BD_{{\rm p}_{k'}} - \tilde{\BP}_{k',m} \big) ,
\ee
where
$
\tilde{\BP}_{k,m}  \triangleq \frac{1}{2} \begin{bmatrix}
\Re \{ \bp_{k,m} \bp_{k,m}^{\rm H} \} & - \Im \{ \bp_{k,m} \bp_{k,m}^{\rm H} \} \\ 
\Im \{ \bp_{k,m} \bp_{k,m}^{\rm H} \} &   \Re \{ \bp_{k,m} \bp_{k,m}^{\rm H} \} 
\end{bmatrix} 
$.

\begin{proposition} \label{prp:sinr_zf}
In the uplink of an FBMC massive MIMO system with ZF combiner and the proposed PDP equalizer, and provided that $N \geq K+1$, the effective SINR can be calculated according to (\ref{eqn:SINR_zf}) on the top of the next page.
\end{proposition}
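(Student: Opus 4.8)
The plan is to follow the MRC derivation of Section~\ref{sec:sinr_mrc} line by line, replacing the MRC moments (\ref{eqn:g_mrc_stats}) by the ZF moments (\ref{eqn:g_zf_stats}). The decomposition (\ref{eqn:GR_elements}) of the real interference coefficient $R_{mm',nn'}^{k,k'}$ into a zero-mean random part $(\check{\bpsi}_{mm',nn'}^k)^{\rm T}\check{\bgamma}_m^{k,k'}$ and a deterministic part $\delta_{kk'}\delta_{mm'}\delta_{nn'}$ is combiner-independent: it uses only the mean $\bmu_m^{k,k'}=\delta_{kk'}\bp_{k,m}$ together with the real-orthogonality identity making $(\bpsi_{mm',nn'}^k)^{\rm H}\bp_{k,m}=\delta_{mm'}\delta_{nn'}+jA_{mm',nn'}$, and the ZF mean (\ref{eqn:g_zf_mean}) coincides with the MRC mean. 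Consequently the average-power formula $\bar{P}_{mm',nn'}^{k,k'}={\rm tr}\{\bC_m^{k,k'}\bPsi_{mm',nn'}^k\}+\delta_{kk'}\delta_{mm'}\delta_{nn'}$ obtained from \cite[p.~53]{mathai1992quadratic} transfers verbatim, the only change being that $\bC_m^{k,k'}$ is now the ZF covariance (\ref{eqn:C_zf}).

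With this in hand I would form the SINR exactly as in the proof of Proposition~\ref{prp:sinr_mrc}, as the ratio of the signal power $\bar{P}_{mm,nn}^{k,k}$ to the total interference-plus-noise power. Two structural features of (\ref{eqn:C_zf}) produce the compact form (\ref{eqn:SINR_zf}). First, since $\bK_m^{k,k'}=\bzero$ and $\bGamma_m^{k,k'}$ depends on the interferer index $k'$ only, the covariance $\bC_m^{k,k'}=\frac{1}{N-K}(\BD_{{\rm p}_{k'}}-\tilde{\BP}_{k',m})$ has the \emph{same} functional form for every pair $(k,k')$; this is what permits the separate self-interference and multiuser sums of the MRC expression to be merged into the single sum over $k'$ in the denominator of (\ref{eqn:SINR_zf}). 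Second, every covariance carries the common factor $\frac{1}{N-K}$, so after multiplying numerator and denominator by $N-K$ this factor cancels in all interference traces, and the post-combining noise variance, which for the ZF weights $\mathcal{W}_m^{i,k}$ in (\ref{eqn:mrc_zf_mmse}) scales as $1/(N-K)$, reduces to the $\sigma_\nu^2$ appearing in (\ref{eqn:SINR_zf}). For the signal coefficient the deterministic term contributes $(N-K)\cdot 1=N-K$ after rescaling, while the accompanying trace ${\rm tr}\{(\BD_{{\rm p}_k}-\tilde{\BP}_{k,m})\bPsi_{mm,nn}^k\}$ is $O(1)$ and subdominant to $N-K$; here it is helpful to write $\bGamma_m^{k,k}=\frac{1}{N-K}\bD_{{\rm p}_k}^{1/2}(\eye-\bq\bq^{\rm H})\bD_{{\rm p}_k}^{1/2}$, where the normalized PDP $\sum_l p_k[l]=1$ makes $\bq\triangleq\bD_{{\rm p}_k}^{-1/2}\bp_{k,m}$ a unit vector, so that the signal-direction fluctuation appears as a projection and the numerator is governed by $N-K$.

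The step I expect to be the main obstacle is the ZF moment computation (\ref{eqn:g_zf_stats}) that the Appendix must supply, on which the entire proposition rests. Unlike MRC, where $\bg_m^{k,k'}=\frac{1}{N}\sum_{i}(H_m^{i,k})^{\ast}\bh_{i,k'}$ is a plain sample average, the ZF weights $\mathcal{W}_m^{i,k}=[\bH_m(\bH_m^{\rm H}\bH_m)^{-1}]_{i,k}$ couple all antennas through the inverse Gram matrix $(\bH_m^{\rm H}\bH_m)^{-1}$; evaluating $\bmu_m^{k,k'}$, $\bGamma_m^{k,k'}$ and $\bK_m^{k,k'}$ therefore requires the first- and second-order moments of a complex inverse-Wishart matrix, and it is exactly the existence of these moments that forces the hypothesis $N\ge K+1$. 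A secondary point requiring care is the bookkeeping of which time-frequency-user triples contribute to the denominator: because ZF nulls the \emph{mean} multiuser channel, the residual multiuser interference enters only through the fluctuation covariance, and I would check that merging these contributions with the self-interference under the single $(m',n')\neq(m,n)$ exclusion reproduces (\ref{eqn:SINR_zf}) without double-counting the desired coefficient. Once (\ref{eqn:g_zf_stats}) and the $1/(N-K)$ scaling are established, substituting (\ref{eqn:C_zf}) into the quadratic-form mean and collecting terms yields (\ref{eqn:SINR_zf}) directly.
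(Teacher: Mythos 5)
Your proposal follows exactly the paper's route: the paper's entire proof of Proposition~\ref{prp:sinr_zf} is the one-liner that the result ``follows from the covariance matrix given in (\ref{eqn:C_zf}) and similar analysis as in the MRC case,'' with the ZF moments (\ref{eqn:g_zf_stats}) established separately in the Appendix via the correlated-plus-independent channel decomposition and the complex Wishart identity $\BE\big[{\rm tr}\big\{(\bH_m^{\rm H}\bH_m)^{-1}\big\}\big]=\frac{K}{N-K}$ for $N\geq K+1$ --- precisely the inverse-Wishart computation you flagged as the main obstacle. Your substitution of the ZF covariance into the MRC quadratic-form power formula, the $1/(N-K)$ rescaling, and your bookkeeping remarks (dropping the $O(1)$ numerator trace and merging the multiuser and self-interference sums under the single $(m',n')\neq(m,n)$ exclusion) are consistent with, and indeed more explicit than, what the paper writes.
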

\begin{proof}
This follows from the covariance matrix given in (\ref{eqn:C_zf}) and similar analysis as in the MRC case.
\end{proof}

\section{Numerical Results} \label{sec:numerical_results}

In this section, we deploy computer simulations to evaluate the efficacy of the proposed equalization method as well as the analysis of the previous sections. For all the simulations in this section, we let $M=512$ and assume $K=10$ terminals in the network. We consider the PHYDYAS prototype filter, \cite{bellanger2010fbmc}, with the overlapping factor $\kappa=4$. Normalized exponentially decaying channel PDPs $p_k[l] = e^{-\alpha_k l} / \big( \sum_{\ell=0}^{L_{\rm h}-1} e^{-\alpha_k \ell} \big), l=0\dots,L_{\rm h}-1$ for $k=0,\cdots,K-1$ with different decaying factors $\alpha_k = (k+1)/20$ for different terminals and length $L_{\rm h}=50$ are assumed\footnote{A similar approach has been taken in \cite{pitarokoilis2012optimality} to choose the channel PDPs for different terminals.}. Notice that the channel PDPs are normalized, i.e., $\sum_{l=0}^{L_{\rm h}-1} p_k[l] = 1$, for $k = 0,\dots,K-1$. Hence, following the channel model in Section \ref{sec:mmimo_fbmc}, the average signal-to-noise ratio (SNR) at the BS antennas input can be calculated as ${\rm SNR} = 1/\sigma_\nu^2$. We present the SINR performance corresponding to terminal $k=0$.

First, we show the SINR for the case where the proposed equalization is not incorporated at the BS. Fig.~\ref{fig:sinr_sat} shows the average SINR performance (with averaging over different channel realizations) of MRC, ZF, and MMSE combiners as a function of different number of BS antennas. The noise level is selected such that the SNR at the input of the BS antennas is equal to $10$~dB. From Fig.~\ref{fig:sinr_sat}, we can see that without the proposed equalization, the SINR performance of all three linear detectors, i.e., MRC, ZF, and MMSE, tend to the saturation level predicted by (\ref{eqn:SINR_sat}) as $N$ grows large. Accordingly, arbitrarily large SINR values cannot be achieved by increasing the BS array size. Also, the SINR performance of ZF and MMSE combiners converges faster to the saturation level as compared to the one in MRC. In practice, when considering a finite number of BS antennas, the impact of SINR saturation depends on the combining method used as well as the channel PDP and noise level.

\begin{figure}[!t]
\centering
\includegraphics[scale=0.6]{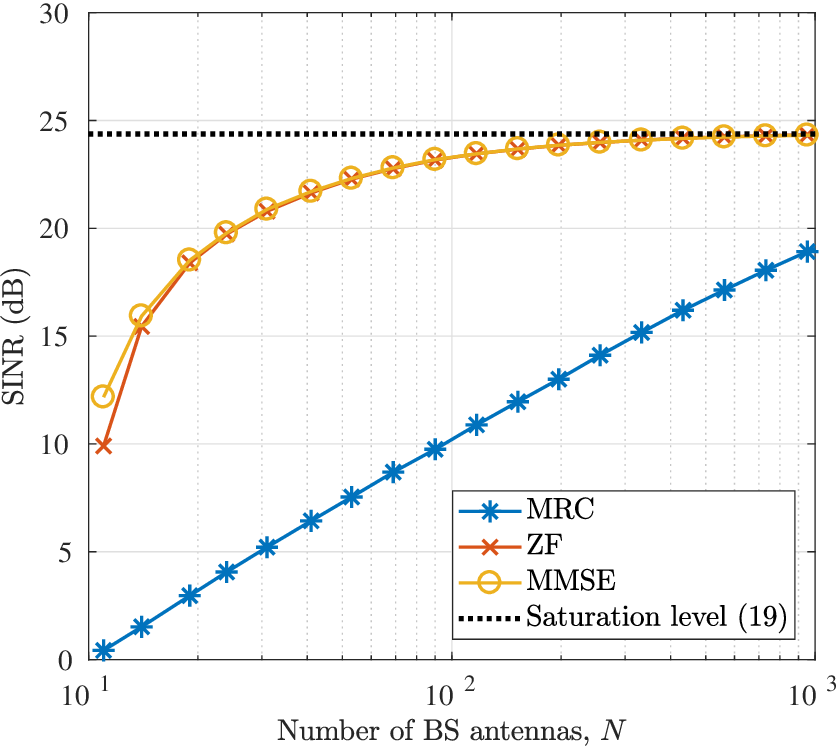}
\caption{SINR performance comparison for the case that the proposed equalizer is not utilized.} 
\label{fig:sinr_sat}
\end{figure}

\begin{figure}[!t]
\centering
\includegraphics[scale=0.6]{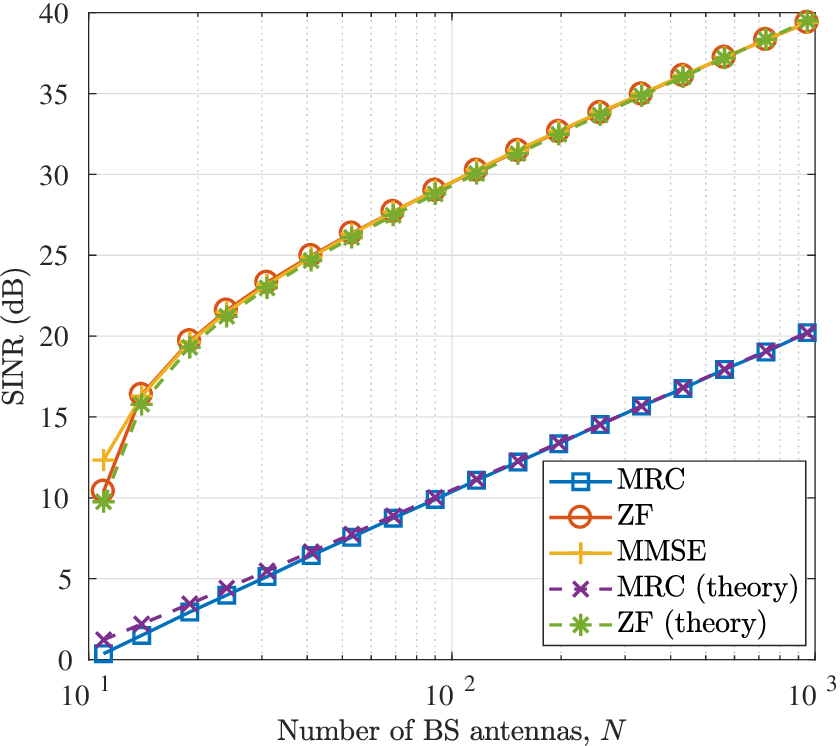}
\caption{SINR performance comparison for the case that the proposed equalizer is utilized. } 
\label{fig:sinr}
\end{figure}

In the next set of simulations, we evaluate the performance of FBMC with the proposed equalizer in place. Fig.~\ref{fig:sinr} shows the SINR performance of MRC, ZF, and MMSE combiners as a function of different number of BS antennas. 
The noise level is selected such that the SNR at the input of the BS antennas is equal to $10$~dB. As it is shown, using the proposed equalization method, the saturation problem of the conventional FBMC systems in massive MIMO channels is avoided and arbitrarily large SINR values can be achieved by increasing $N$. In Fig.~\ref{fig:sinr}, we have also shown the theoretical SINR values for MRC and ZF combiners, as calculated in (\ref{eqn:SINR_mrc}) and (\ref{eqn:SINR_zf}), respectively. This figure confirms that the theoretical SINR values match the simulated ones. This verifies the accuracy of the analysis of Section \ref{sec:sinr}.

Fig.~\ref{fig:sinr_snr} shows the theoretical SINR performance of the MRC and ZF combiners and with the proposed equalization as a function of different input SNR values. Moreover, the SINR performance of OFDM with MRC and ZF combiners is shown as a benchmark; see \cite{ngo2013energy} for the SINR expressions of OFDM. In this figure, we consider $N = 100$ BS antennas. As the figure shows, OFDM and FBMC have almost identical SINR performance when MRC is utilized. On the other hand, in the case of ZF combiner, although the performance of OFDM and FBMC are very close in the low SNR regime, a better SINR is expected for OFDM in the high SNR region. The reason for this phenomenon is that in OFDM, the interference is entirely removed using the CP. Hence, by increasing the input SNR, a better SINR at the output is also expected. In contrast, the FBMC waveform is designed to increase the bandwidth efficiency by excluding the CP overhead and providing much lower out-of-band emission than OFDM. Hence, due to the absence of CP, some residual interference remains after the ZF combining. This residual interference becomes noticeable only in the very high SNR regime.

\begin{figure}[!t]
\centering
\includegraphics[scale=0.6]{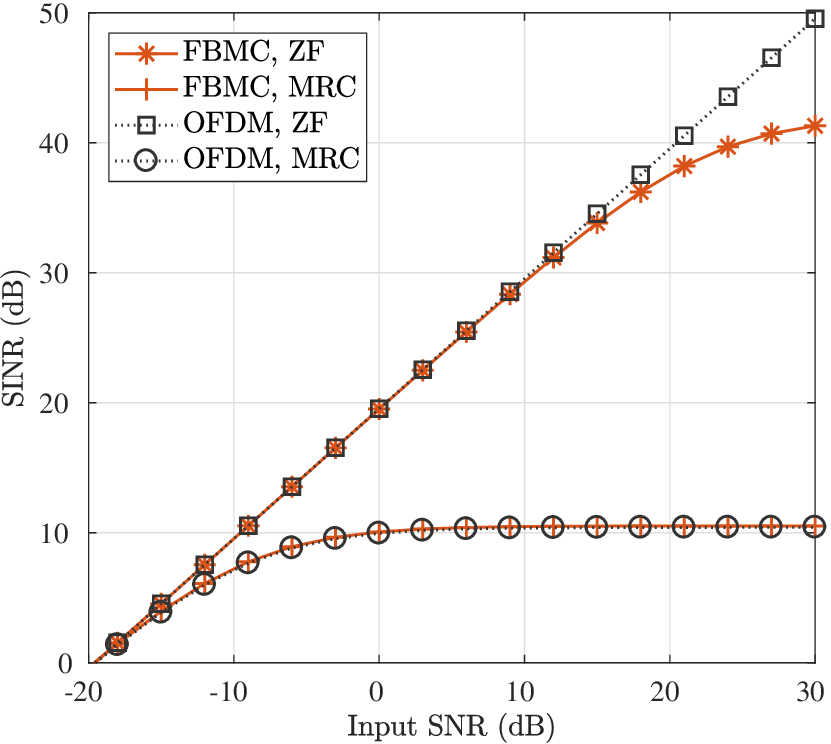}
\caption{SINR performance comparison as a function of different SNR values. In the case of FBMC, the proposed equalizer is incorporated at the BS. Here $N = 100$ BS antennas is considered.} 
\label{fig:sinr_snr}
\end{figure}

As discussed in Section \ref{sec:self-equalization}, by incorporating a large number of BS antennas, one can widen the subcarrier bands in an FBMC system. This, in turn, brings a number of advantages, e.g., robustness to CFO and channel time variations, lower PAPR, lower latency, higher bandwidth efficiency. These benefits are crucial for the next generation of wireless systems. In the next experiment, we aim at evaluating the SINR performance as we widen the subcarrier bands. Fig.~\ref{fig:sinr_symbol_spacing} shows the SINR for different values of FBMC subcarrier spacings, $\Delta F \triangleq 1/M$. In this experiment, the input SNR of $0$~dB is considered. To use the simple single-tap equalizer per subcarrier, the design norm is to choose the symbol spacing to be about an order of magnitude larger than the channel length. In this case, with $L_{\rm h}=50$, this leads to the symbol spacing of around $M/2 = 500$, which in turn yields the subcarrier spacing of $\Delta F = 0.001$. However, as the figure shows, by incorporating a large number of BS antennas as well as the proposed equalizer, one can considerably increase the subcarrier spacing while the SINR performance has a slight degradation. In particular, increasing the subcarrier spacing by an order of magnitude leads to about 0.7~dB SINR degradation when using ZF combiner. In MRC, the degradation is negligible, i.e., less than 0.3 dB.

\begin{figure}[!t]
\centering
\includegraphics[scale=0.6]{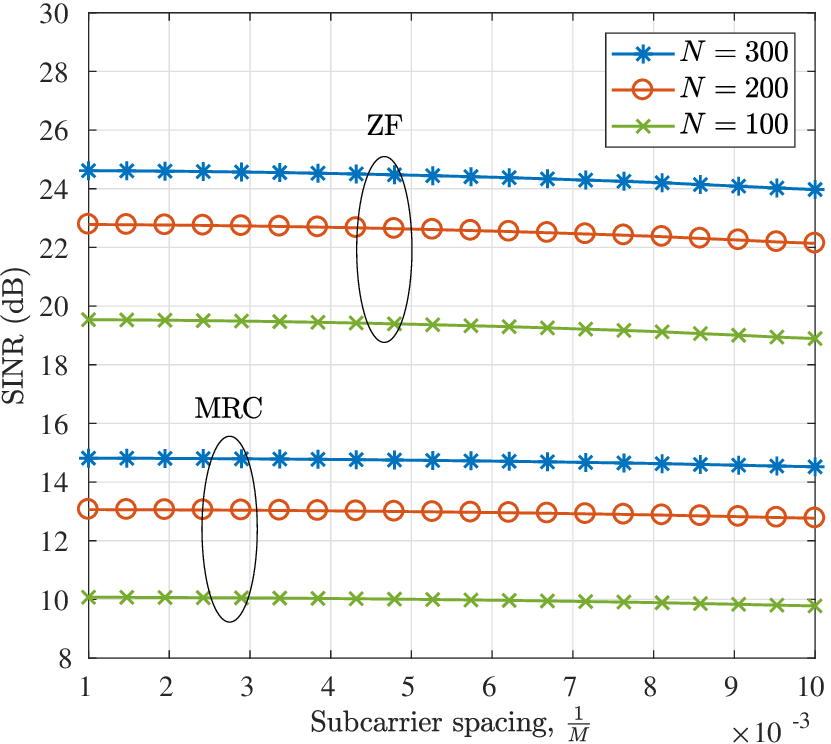}
\caption{SINR performance comparison for different values of the FBMC subcarrier spacing $\Delta F \triangleq 1/M$.} 
\label{fig:sinr_symbol_spacing}
\end{figure}

Fig.~\ref{fig:ber} presents the uncoded bit error rate (BER) performance comparison. In this experiment, $N=100$ BS antennas is considered. Moreover, the transmitted symbols belong to a 64 quadrature amplitude modulation (QAM) constellation. We compare the performance of FBMC with and without our proposed channel PDP equalizer. We also show the performance of OFDM as a benchmark. For all cases, ZF combiner is utilized. As the figure shows, the BER performance is improved significantly when the proposed channel PDP equalizer is in place. Furthermore, we achieve the same performance as in OFDM, where the channel frequency response is completely flat over each individual subcarrier band.

\begin{figure}[!t]
\centering
\includegraphics[scale=0.6]{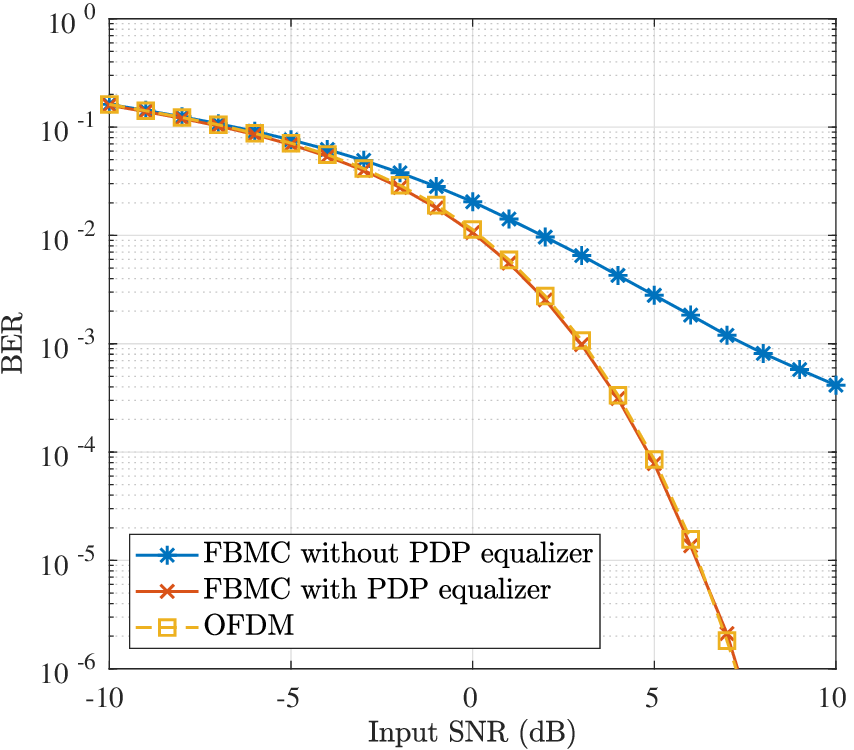}
\caption{BER performance comparison. Here, $N=100$ BS antennas and the ZF combiner are considered.} 
\label{fig:ber}
\end{figure}

\section{Conclusion and Discussion} \label{sec:conclusion}

In this paper, we studied the performance of FBMC transmission in the context of massive MIMO. We considered single-tap-per-subcarrier equalization using the conventional linear combiners, i.e., MRC, ZF, and MMSE. It was shown that the correlation between the multi-antenna combining tap weights and the channel impulse responses leads to an interference which does not fade away even with an infinite number of BS antennas. Hence, arbitrarily large SINR values cannot be achieved, and the SINR is upper-bounded by a certain deterministic value. We derived a closed-form expression for this upper bound, identified the source of SINR saturation, and proposed an efficient equalization method to remove the above correlation and resolve the problem. We mathematically analyzed the performance of the FBMC system incorporating the proposed equalization method and derived closed-form expressions for the SINR in the cases of MRC and ZF.

Throughout this paper, we assumed a \emph{co-located} BS antenna array that is sufficiently compact so that the channel responses corresponding to a particular user and different BS antennas are subject to the same channel PDP. It is worth mentioning that there exist another type of massive MIMO setup in which the elements of the BS array are distributed in a large area. This setup, which is out of the scope of this paper, is called \emph{distributed} or \emph{cell-free} massive MIMO, \cite{ngo2017cell}. In this scenario, for a given user, channel responses corresponding to different BS antennas undergo different PDPs. This is a completely different problem than what we are considering in this paper. Thus, it remains as a future study.

The analyses in this paper was based on single-tap per subcarrier equalization. However, we note that as mentioned in Section \ref{sec:mmimo_fbmc}, the asymptotic SINR saturation is an inherent property of FBMC-based massive MIMO systems due to the absence of CP. As a result, one may expect the SINR saturation issue to appear also in FBMC systems incorporating multi-tap per subcarrier equalization methods such as those in \cite{aminjavaheri2015frequency} and \cite{ihalainen2011channel} if we do not equalize the channel PDP. Using multi-tap equalizers, however, can increase the saturation level in expense of a higher computational cost. We can also realize this point from the results of \cite{aminjavaheri2015frequency}, where the performance of multi-tap and single-tap equalizers are compared with each other for different number of BS antennas. Therefore, our proposed channel PDP equalizer can also be adopted in multi-tap systems to further improve the performance.


\appendix[Proof of (\ref{eqn:g_mrc_stats}) and (\ref{eqn:g_zf_stats})]
\section{Proof of (\ref{eqn:g_mrc_stats}) and (\ref{eqn:g_zf_stats})} \label{appendix:g}

\subsection{MRC}

In the case of MRC, we have $\bg_m^{k,k'} = \frac{1}{D_m^{k,k}} \sum_{i=0}^{N-1} \left( H_{m}^{i,k} \right)^* \bh_{i,k'}$. Moreover, for large values of $N$, $D_m^{k,k}$ tends to $N$ due to the law of large numbers. Hence, the mean of the $\ell^{\rm th}$ element of $\bg_m^{k,k'}$, for $\ell \in \{0,\dots,L_{\rm h}-1\}$, can be calculated as
\begin{align*}
\BE\{g_m^{k,k'}[\ell]\} &= \frac{1}{N} \sum_{i=0}^{N-1} \sum_{l=0}^{L_{\rm h}-1} \mathbb{E}\{ h_{i,k}^*[l] h_{i,k'}[\ell] \} e^{j\frac{2\pi l m}{M}} \\
&= \delta_{kk'} p_{k,m}[\ell] .
\end{align*}
This leads to (\ref{eqn:g_mrc_mean}). We now calculate the correlation between $g_m^{k,k'}[\ell]$ and $g_m^{k,k'}[\ell']$, for $\ell,\ell' \in \{0,\dots,L_{\rm h}-1\}$. We consider the case that $k \neq k'$. Hence, 
\begin{align*}
&\BE \big\{ g_m^{k,k'}[\ell] \big( g_m^{k,k'}[\ell'] \big)^* \big\}  \\
&= \frac{1}{N^2} \sum_{i=0}^{N-1} \sum_{i'=0}^{N-1} \BE \big\{ \left(H_m^{i,k}\right)^* H_m^{i',k} h_{i,k'}[\ell] h_{i',k'}^*[\ell'] \big\} \\
&= \frac{1}{N^2} \sum_{i=0}^{N-1} \sum_{i'=0}^{N-1} \sum_{l=0}^{L_{\rm h}-1} \sum_{l'=0}^{L_{\rm h}-1} \BE \{ h_{i,k}^*[l] h_{i',k}[l'] h_{i,k'}[\ell] h_{i',k'}^*[\ell'] \} \\
& \-\ \-\ \times e^{j\frac{2\pi (l-l')m}{M}} = \frac{1}{N}  \delta_{\ell \ell'} p_{k'} [\ell] ,  \hspace{8mm} \text{for } k\neq k'.
\end{align*}
The above correlation for the case of $k=k'$ can be determined using a similar line of derivations. The result is 
\begin{equation*}
\BE \big\{ g_m^{k,k}[\ell] \big( g_m^{k,k}[\ell'] \big)^* \big\} = \frac{1}{N} \delta_{\ell \ell'} p_{k}[\ell] + p_{k,m}[\ell] p_{k,m}^*[\ell'] .
\end{equation*}
This leads to (\ref{eqn:g_mrc_cov}). Moreover, the pseudo-covariance matrix $\bK_m^{k,k'}$ in (\ref{eqn:g_mrc_rel}) can be derived using the same line of derivations as above.

\subsection{ZF}

Here, we use similar techniques as in \cite{corvaja2010sinr}. We have $g_m^{k,k'}[\ell] = \bw_{m,k}^{\rm H} \bh_{k'}[\ell]$, where $\bw_{m,k}$ is the $k^{\rm th}$ column of the combiner matrix $\BW_m$, and $\bh_{k'}[\ell]$ is an $N\times 1$ vector with its $i^{\rm th}$ element equal to $h_{i,k'}[\ell]$. In the case of ZF equalizer, we have $\BW_m = \bH_m (\bH_m^{\rm H} \bH_m)^{-1}$. Also, let $\bh_{m,k}$ denote the $k^{\rm th}$ column of $\bH_m$. Hence, the mean of $g_m^{k,k'}[\ell]$ can be determined as follows.
\begin{align*}
&\BE \{ g_m^{k,k'}[\ell] \} = \BE \{ \bw_{m,k}^{\rm H} \bh_{k'}[\ell] \} \\
&= \frac{1}{M} \sum_{m'=0}^{M-1} \BE \{ \bw_{m,k}^{\rm H} \bh_{m',k'} \} e^{j\frac{
2\pi m' \ell}{M}} \\
&\stackrel{(a)}{=} \frac{1}{M} \sum_{m'=0}^{M-1}  \sum_{l=0}^{L_{\rm h}-1} \BE \{ \bw_{m,k}^{\rm H} \bh_{m,k'} \} p_{k',m}[l] e^{j\frac{2\pi m' (\ell-l)}{M}}  \\
&\stackrel{(b)}{=} \frac{1}{M} \sum_{m'=0}^{M-1}  \sum_{l=0}^{L_{\rm h}-1} \delta_{kk'} p_{k,m}[l]  e^{j\frac{2\pi m' (\ell-l)}{M}}  = \delta_{kk'} p_{k,m}[\ell]  .
\end{align*}
This results in (\ref{eqn:g_zf_mean}). In the above equation, (a) follows from the fact the channel frequency response $\bh_{m',k'}$ can be expressed as a combination of a term that is correlated with $\bh_{m,k'}$ and a term that is independent of $\bh_{m,k'}$, i.e.,
\be \label{eqn:indep} 
\bh_{m',k'} = \alpha_{mm',k'} \bh_{m,k'} + \bh^{\rm indep}_{mm',k'}, 
\ee
where $\bh^{\rm indep}_{mm',k'}$ is independent of $\bh_{m,k'}$ and the correlation coefficient $\alpha_{mm',k'}$ can be calculated as 
\begin{equation*}
\alpha_{mm',k'} = \BE \big\{ H_{m'}^{i,k'} \big(H_{m}^{i,k'}\big)^* \big\} = P_{k'}[m'-m] ,
\end{equation*}
where $P_{k}[m] \triangleq \sum_{l=0}^{L_{\rm h}-1} p_k[l] e^{-j\frac{2\pi ml}{M}}$ is the $m^{\rm th}$ coefficient of the $M$-point discrete Fourier transform of the channel PDP $p_k[l]$. The step (b) above follows from the fact that in the case of ZF equalization, we have $\bw_{m,k}^{\rm H} \bh_{m,k'} = \delta_{kk'}$, which results from $\BW_m^{\rm H} \bH_m = \eye_k$.

In order to calculate the covariance matrix $\bGamma_m^{k,k'}$ in (\ref{eqn:g_zf_cov}), we now find the correlation between $g_m^{k,k'}[\ell]$ and $g_m^{k,k'}[\ell']$, for $\ell,\ell' \in \{0,\dots,L_{\rm h}-1\}$. We have,
\begin{flalign*}
& \BE \big\{ g_m^{k,k'}[\ell] \big( g_m^{k,k'}[\ell'] \big)^* \big\} = \BE \{ \bw_{m,k}^{\rm H} \bh_{k'}[\ell] \bh_{k'}^{\rm H}[\ell'] \bw_{m,k}  \} \\
& \stackrel{(a)}{=} \delta_{kk'} p_{k,m}[\ell] p_{k,m}^*[\ell'] + \frac{1}{M^2} \sum_{m'=0}^{M-1}\sum_{m''=0}^{M-1} \\
& ~~ \BE\{ \bw_{m,k}^{\rm H} \bh_{mm',k'}^{\rm indep}  \big( \bh_{mm'',k'}^{\rm indep} \big)^{\rm H} \bw_{m,k} \} e^{j\frac{2\pi m' \ell}{M}} e^{-j\frac{2\pi m'' \ell'}{M}} \\
& \stackrel{(b)}{=} \delta_{kk'} p_{k,m}[\ell] p_{k,m}^*[\ell'] + \frac{1}{M^2(N-K)} \sum_{m'=0}^{M-1}\sum_{m''=0}^{M-1} \\
& ~~ \Big( P_{k'}[m'-m''] - P_{k'}[m'-m] P_{k'}[m-m''] \Big) e^{j\frac{2\pi (m' \ell - m'' \ell')}{M}}  \\
& \stackrel{(c)}{=}  \delta_{kk'} p_{k,m}[\ell] p_{k,m}^*[\ell'] \hspace{-1pt} + \hspace{-1pt} \frac{1}{N-K} \big( \delta_{\ell \ell'} p_{k'}[\ell] \hspace{-1pt}  -  \hspace{-1pt} p_{k',m}[\ell] p_{k',m}^*[\ell'] \big)  . 
\end{flalign*}
This results in (\ref{eqn:g_zf_cov}). In the above equation, equality (a) follows from (\ref{eqn:indep}). Then, equality (b) follows from the independence of $\bw_{m,k}$ from $\bh_{mm',k'}^{\rm indep}$ and $\bh_{mm'',k'}^{\rm indep}$, the correlation
\begin{align*}
&\BE \Big\{ \bh_{mm',k'}^{\rm indep} \big( \bh_{mm'',k'}^{\rm indep} \big)^{\rm H} \Big\} \\
& ~~~ = \Big( P_{k'}[m'-m''] - P_{k'}[m'-m] P_{k'}[m-m''] \Big) \eye_N ,
\end{align*}
and the identity 
\begin{equation*}
\BE \Big[ {\rm tr} \big\{ \big(\BW_m^{\rm H} \BW_m\big)^{-1} \big\} \Big] =  \BE \Big[  {\rm tr} \big\{ \big(\bH_m^{\rm H} \bH_m\big)^{-1} \big\} \Big] = \frac{K}{N-K} ,
\end{equation*}
for $N \geq K+1$. The latter identity is based on the fact that $\bH_m^{\rm H} \bH_m$ is a $K \times K$ complex central Wishart matrix with $N$ degrees of freedom and covariance $\eye_K$, \cite{tulino2004random}. Finally, the equality (c) above follows using some straightforward algebraic manipulations. We note that using a similar line of derivations as above, one can find the pseudo-covariance matrix given in (\ref{eqn:g_zf_rel}).

\bibliographystyle{IEEEtran} 
\bibliography{IEEEabrv,MM-FBMC}

\end{document}